\newif\ifDraft 
\newif\ifAnon 
\newif\ifFull 
\newif\ifIEEE 
\newif\ifICLR 
\newif\ifICML 
\newif\ifNIPS 
\Draftfalse 
\Anonfalse
\Fulltrue
\IEEEfalse
\ICMLfalse
\NIPSfalse
 
\ifICLR
	\documentclass{article} 
	\usepackage{iclr2026_conference,times}
\fi
\ifICML
	\documentclass{article}
	\usepackage{icml2026}
\fi
\ifFull
    \documentclass{article} 
\fi
\ifNIPS
	\documentclass{article}
	\PassOptionsToPackage{numbers}{natbib}
	\usepackage{neurips_2026}
\fi
 
\usepackage[english]{babel}
\usepackage[utf8]{inputenc}
\usepackage[T1]{fontenc}

\usepackage{etoolbox} 
\usepackage{xparse}

\usepackage{lmodern}

\usepackage[table]{xcolor}
\usepackage{booktabs,makecell}
\usepackage{supertabular}
\usepackage{multirow}
\usepackage{longtable}

\usepackage{amssymb}
\usepackage{amsmath}
\usepackage{amsfonts}
\usepackage{amsthm}
\usepackage{nccmath}
\usepackage{bbm}
\usepackage{bm} 
\usepackage[b]{esvect}
\usepackage{mathrsfs}
\usepackage{mathtools}
\usepackage{centernot}
\usepackage{xspace}
\usepackage{xifthen}
\usepackage{stmaryrd}
\usepackage{pifont}
\usepackage{underscore} 

\usepackage{totcount}
\usepackage{color}
\usepackage[colorinlistoftodos]{todonotes}
\usepackage{verbatim}

\usepackage{graphicx}
\usepackage{caption}
\usepackage{subcaption}

\usepackage{lscape}
\usepackage{float}
\usepackage[shortlabels,inline]{enumitem}
\usepackage{varwidth} 
\usepackage[most]{tcolorbox} 
\usepackage[export]{adjustbox}
\usepackage{authblk}
\usetikzlibrary{positioning}

\definecolor{linkblue}{HTML}{00003c}
\usepackage[colorlinks,breaklinks=true,allcolors=linkblue]{hyperref} 
\usepackage[capitalize,noabbrev]{cleveref} 

\usepackage{algorithm}
\usepackage[noend]{algpseudocode}

\makeatletter 
\ifFull
	\renewcommand*\l@section[2]
	{%
		\ifnum \c@tocdepth >\z@
		\addpenalty \@secpenalty
		\addvspace {1.0em \@plus \p@ }%
		\setlength \@tempdima {1.5em}%
		\begingroup
		\parindent \z@
		\rightskip \@pnumwidth
		\parfillskip -\@pnumwidth
		\leavevmode \bfseries
		\advance \leftskip \@tempdima
		\hskip -\leftskip
		#1\nobreak
		\hfil
		\nobreak
		\hb@xt@ \@pnumwidth {\hss #2\kern -\p@ \kern \p@ }%
		\par
		\endgroup
		\fi
	}
\fi
\makeatother

\setlist{itemsep=2pt,topsep=2pt,parsep=2pt,partopsep=2pt} 

\definecolor{codegreen}{rgb}{0,0.6,0}
\definecolor{codegray}{rgb}{0.5,0.5,0.5}
\definecolor{codepurple}{rgb}{0.58,0,0.82}
\definecolor{backcolour}{rgb}{0.95,0.95,0.92}
\definecolor{mDarkBrown}{HTML}{604c38}
\definecolor{mDarkTeal}{HTML}{23373b}
\definecolor{mLightBrown}{HTML}{EB811B}
\definecolor{mLightGreen}{HTML}{14B03D}
\definecolor{lightgreen}{rgb}{0.86, 0.93, 0.78}
\definecolor{textgreen}{rgb}{0.0, 0.5, 0.0}
\definecolor{bordergreen}{rgb}{0.55, 0.76, 0.74}
\definecolor{lightblue}{rgb}{0.70, 0.90, 0.99}
\definecolor{borderblue}{rgb}{0.01, 0.66, 0.96}
\definecolor{lightamber}{rgb}{1, 0.93, 0.70}
\definecolor{borderamber}{rgb}{1, 0.76, 0.03}
\definecolor{lightcolor4}{rgb}{ 0.93, 0.70, 1}
\definecolor{bordercolor4}{rgb}{0.76, 0.03, 1}
\definecolor{lightcolor5}{rgb}{0.78,0.86,0.93}
\definecolor{bordercolor5}{rgb}{0.74,0.55,0.76}
\definecolor{blueinfo}{RGB}{64, 112, 173}
\definecolor{greeninfo}{RGB}{148, 176, 54}
\definecolor{yellowinfo}{RGB}{240, 189, 82}
\definecolor{redinfo}{RGB}{194, 77, 59}
\definecolor{devpurple}{RGB}{125, 58, 193}
\definecolor{implgray}{RGB}{193, 193, 193}
\definecolor{greencomment}{RGB}{118,191,105}
\definecolor{bluecomment}{RGB}{105,166,191}
\definecolor{yellowcomment}{RGB}{239,200,115}
\definecolor{redcomment}{RGB}{194, 77, 59}
\definecolor{yellowdark}{RGB}{156, 151, 61}
\definecolor{MyTeal}{RGB}{0,128,128}

\newcommand{\codecomment}[1]{{\footnotesize\textcolor{black!60!white}{/\!\!/~\!#1}}}
\renewcommand{\Comment}[1]{\codecomment{#1}}

\definecolor{ygreen}{RGB}{37, 133, 13}

\makeatletter
\algnewcommand{\ExtendedState}[1]{\State
	\parbox[t]{\dimexpr\linewidth-\ALG@thistlm}{\hangindent=\algorithmicindent\strut\hangafter=3#1\strut}}
\makeatother

\algnewcommand\algorithmicinput{\textbf{Input:}}
\algnewcommand\Input{\item[\algorithmicinput]}

\algrenewcommand{\algorithmiccomment}[1]{{\color{gray}// #1}}
\algnewcommand{\IIf}[1]{\State\algorithmicif\ #1\ \algorithmicthen}
\algnewcommand{\EndIIf}{\unskip\ \algorithmicend\ \algorithmicif}

\makeatletter
\newcommand*\ifcounter[1]{%
	\ifcsname c@#1\endcsname
	\expandafter\@firstoftwo
	\else
	\expandafter\@secondoftwo
	\fi
}
\makeatother

\NewDocumentEnvironment{algorithmWithNumbering}{m}{%
	\begin{algorithmic}[1]
		\ifcounter{ALG@line@#1}{}{\newcounter{ALG@line@#1}}
		\setcounter{ALG@line}{\value{ALG@line@#1}}
	}{%
		\setcounter{ALG@line@#1}{\value{ALG@line}}
	\end{algorithmic}%
}

\RequirePackage{color}
\RequirePackage[most]{tcolorbox}
\RequirePackage{totcount}
\tcbuselibrary{skins,breakable}

\IfFileExists{fontawesome.sty}{%
	\RequirePackage{fontawesome}
}{

} 

\newtcolorbox[use counter=alg,list inside=alg,crefname={Algorithm}{Algorithms}]{abox}[6][]{
	enhanced,
	colframe=black,
	colback=white,
	boxrule={#4},
	arc={#3},
	auto outer arc,
	pad at break*=0pt,
	vfill before first,
	before={\par\medskip\noindent},
	after={},
	top=12pt, left=4pt, enlarge top by=6pt,
	title={\rule[-.3\baselineskip]{0pt}{\baselineskip}\normalsize\sffamily\bfseries #2}, 
	varwidth boxed title*=-10pt, 
	attach boxed title to top left={yshift=-10pt,xshift=10pt}, 
	coltitle=black,
	boxed title style={colback=white,boxrule={#6},arc={#5},auto outer arc},
	#1
}

\makeatletter
\newcommand{\DrawLine}{%
	\begin{tikzpicture}
		\path[use as bounding box] (0,0) -- (\linewidth,0);
		\draw[color=black!75,dashed,dash phase=2pt]
		(0-\kvtcb@leftlower-\kvtcb@boxsep,0)--
		(\linewidth+\kvtcb@rightlower+\kvtcb@boxsep,0);
	\end{tikzpicture}%
}
\makeatother

\newtcolorbox{mybox}[2][]{%
	enhanced,
	title        = {#2},
	attach boxed title to top left={xshift=+3mm,yshift*=-3mm},
	breakable    = true,
	colback      = black!4,
	colframe     = black!75,
	fonttitle    = \bfseries,
	colbacktitle = black!10!white,
	coltitle     = black,
	#1
}

\newtcolorbox[use counter=func,list inside=func]{functionality}[2][]{%
	enhanced,
	title        = {Functionality~\thetcbcounter: #2},
	attach boxed title to top left={xshift=+3mm,yshift*=-3mm},
	breakable    = true,
	colback      = yellow!4,
	colframe     = black!75,
	fonttitle    = \bfseries,
	fontupper    = \small,
	fontlower    = \small,
	colbacktitle = yellow!10!white,
	coltitle     = black,
	#1
}

\newtcolorbox[use counter=prot,list inside=prot]{protocol}[2][]{%
	enhanced,
	title        = {Protocol~\thetcbcounter: #2},
	attach boxed title to top left={xshift=+3mm,yshift*=-3mm},
	breakable    = true,
	colback      = black!4,
	colframe     = black!75,
	fonttitle    = \bfseries, 
	fontupper    = \small,
	fontlower    = \small,
	colbacktitle = black!10!white,
	coltitle     = black,
	#1
}

\newtcolorbox[use counter=proc,list inside=proc]{procedure}[2][]{%
	enhanced,
	title        = {Procedure~\thetcbcounter: #2},
	attach boxed title to top left={xshift=+3mm,yshift*=-3mm},
	breakable    = true,
	colback      = cyan!2,
	colframe     = black!75,
	fonttitle    = \bfseries, 
	fontupper    = \small,
	fontlower    = \small,
	colbacktitle = cyan!5!white,
	coltitle     = black,
	#1
}

\newtcbox{\xmybox}[1][red]{on line,
	arc=7pt,colback=#1!10!white,colframe=#1!50!black,
	before upper={\rule[-3pt]{0pt}{10pt}},boxrule=1pt,
	boxsep=0pt,left=6pt,right=6pt,top=2pt,bottom=2pt}

\renewcommand{\vec}[1]{\bm{#1}}
\newcommand{\cC}{\mathcal{C}}
\newcommand{\cD}{\mathcal{D}}
\newcommand{\cE}{\mathcal{E}}

\newcommand{\cI}{\mathcal{I}}

\newcommand{\cL}{\mathcal{L}}
\newcommand{\cM}{\mathcal{M}}

\newcommand{\cP}{\mathcal{P}}

\newcommand{\cR}{\mathcal{R}}
\newcommand{\cS}{\mathcal{S}}
\newcommand{\cT}{\mathcal{T}}

\newcommand{\cV}{\mathcal{V}}

\newcommand{\cY}{\mathcal{Y}}

\newcommand{\vecf}{\ensuremath{\mathbf{f}}}

\newcommand{\vecj}{\ensuremath{\mathbf{j}}}

\newcommand{\vecl}{\ensuremath{\mathbf{l}}}

\newcommand{\vectt}{\ensuremath{\mathbf{t}}}
\newcommand{\vecu}{\ensuremath{\mathbf{u}}}
\newcommand{\vecv}{\ensuremath{\mathbf{v}}}
\newcommand{\vecw}{\ensuremath{\mathbf{w}}}
\newcommand{\vecx}{\ensuremath{\mathbf{x}}}
\newcommand{\vecy}{\ensuremath{\mathbf{y}}}
\newcommand{\vecz}{\ensuremath{\mathbf{z}}}
\newcommand{\veczero}{\ensuremath{\mathbf{0}}}

\newcommand{\xmath}[1]{\ensuremath{#1}\xspace}
\newcommand{\command}[1]{\textnormal{\textsc{#1}}} 
\newcommand{\cmd}[1]{\command{#1}} 

\renewcommand{\gets}{\leftarrow}
\newcommand{\rgets}[1]{%
\ifthenelse{\isempty{#1}}{\xleftarrow{\$}}{\xleftarrow{#1}}
}

\newcommand{\set}[1]{\left\{ { #1 }\right\}}

\let\emptyset\varnothing
\newcommand{\setpred}[2]{\xmath{\set{\begin{matrix}#1\end{matrix}\,:\ \begin{matrix}#2\end{matrix}}}}

\DeclarePairedDelimiter\floor{\lfloor}{\rfloor}

\newcommand{\RR}{\mathbb{R}}
\newcommand{\NN}{\mathbb{N}}

\newcommand{\FF}{\mathbb{F}}

\newcommand{\Func}[1]{\mathcal{F}_{\mathsf{#1}}}

\newcommand{\ZKP}{\mathsf{ZKP}}

\newcommand{\simulator}{\mathcal{S}}
\newcommand{\rel}{\xmath{\mathcal{R}}}

\newcommand{\statement}{\mathsf{x}}
\newcommand{\stm}{\statement}
\newcommand{\witness}{\mathsf{w}}
\newcommand{\wit}{\witness}

\newcommand{\com}{\mathsf{com}}
\newcommand{\commit}{\mathsf{Com}}

\newcommand{\hcom}[1]{\mathsf{Hcom}_{#1}}

\newcommand{\Fcp}{\hyperref[func:cp]{\ensuremath{\Func{CP}}}}
\newcommand{\auth}[1]{\left\llbracket #1 \right\rrbracket}

\newcommand{\binID}{\mathsf{binID}}
\newcommand{\edges}{\mathsf{edges}}
\newcommand{\gain}{\mathsf{gain}}
\newcommand{\treepath}{\mathsf{path}}

\newcommand{\buildtree}{\mathsf{BuildTree}}
\newcommand{\findsplit}{\mathsf{FindSplit}}

\newcommand{\Avail}{\mathsf{Avail}}
\newcommand{\padtoheight}{\mathsf{PadToHeight}}
\newcommand{\accgap}{\ensuremath{|\Delta|}}
\newcommand{\certxgb}{\xmath{\hyperref[proc:cert-xgboost]{\mathsf{CertXGB}}}}
\newcommand{\fdummy}{f_{\text{dum}}}
\newcommand{\bdummy}{b_{\text{dum}}}
\newcommand{\tdummy}{t_{\text{dum}}}

\newcommand{\prebinfp}{\xmath{\hyperref[prot:prebin-fp]{\mathsf{PreBinFP}}}}
\newcommand{\validatelogit}{\xmath{\hyperref[prot:validate-logit]{\mathsf{ValidateLogit}}}}
\newcommand{\logitfromprobfp}{\xmath{\hyperref[eq:logitfp]{\mathsf{LogitFromProbFP}}}}
\newcommand{\validateinference}{\xmath{\hyperref[prot:validate-inference]{\mathsf{ValidateInference}}}}
\newcommand{\inithists}{\xmath{\hyperref[prot:init-hists]{\mathsf{InitHists}}}}
\newcommand{\validateleafweights}{\xmath{\hyperref[prot:validate-leaf-weights]{\mathsf{ValidateLeafWeights}}}}
\newcommand{\validatesplits}{\xmath{\hyperref[prot:validate-splits]{\mathsf{ValidateSplits}}}}
\newcommand{\sigmoidwidefp}{\xmath{\hyperref[eq:sigmoidfp]{\mathsf{SigmoidWideFP}}}}
\newcommand{\clipfp}{\xmath{\hyperref[prot:ClipFP]{\mathsf{ClipFP}}}}

\newcommand{\fp}[1]{\tilde{#1}}
\newcommand{\gainfp}{\widetilde{\mathsf{gain}}}
\newcommand{\scale}{\mathsf{scale}}
\newcommand{\trainfp}{\xmath{\hyperref[proc:Train]{\mathsf{TrainXGB}}}}
\newcommand{\buildtreefp}{\mathsf{BuildTreeFP}}
\newcommand{\findsplitfp}{\mathsf{FindSplitFP}}

\newcommand{\certforest}{\xmath{\hyperref[proc:cert-forest]{\mathsf{CertForest}}}}
\newcommand{\inithistslabel}{\xmath{\hyperref[prot:init-hists-label]{\mathsf{InitHistsLabel}}}}
\newcommand{\validateleafweightslabel}{\xmath{\hyperref[prot:validate-leaf-weights-label]{\mathsf{ValidateLeafWeightsLabel}}}}
\newcommand{\validatesplitsgini}{\xmath{\hyperref[prot:validate-splits-gini]{\mathsf{ValidateSplitsGini}}}}

\ifDraft
	\usepackage[multiuser,inline,nomargin,draft]{fixme}
\else
	\usepackage[multiuser,inline,nomargin]{fixme}
\fi
\FXRegisterAuthor{akira}{eakira}{\color{redcomment}Akira}
\FXRegisterAuthor{antigoni}{eantigoni}{\color{devpurple}Antigoni}
\FXRegisterAuthor{nikolas}{enikolas}{\color{MyTeal}Nikolas}
\FXRegisterAuthor{chenkai}{echenkai}{\color{greencomment}Chenkai}
\FXRegisterAuthor{jiayi}{ejiayi}{\color{bluecomment}Jiayi}
\fxusetheme{color}

\newcommand{\akira}[1]{\akiranote{#1}}

\definecolor{emailgreen}{RGB}{11,102,35}
\newcommand{\email}[1]{\href{mailto:#1}{\textcolor{emailgreen}{\texttt{\nolinkurl{#1}}}}}

\newtheorem{claim}{Claim}

\ifFull
\usepackage[a4paper,margin=2.8cm]{geometry} 
\fi

\ifICML
\icmltitlerunning{ZKBoost: Zero-Knowledge Verifiable Training for XGBoost}
\fi

\ifFull
\usepackage{authblk}
\title{ZKBoost: Zero-Knowledge Verifiable Training for XGBoost}
\author[1]{Nikolas Melissaris}
\author[2]{Antigoni Polychroniadou} 
\author[2]{Akira Takahashi}
\author[3]{Chenkai Weng}
\author[3]{Jiayi Xu}
\affil[1]{CNRS, IRIF, Université Paris Cité}
\affil[2]{JPMorgan AI Research \& AlgoCRYPT CoE}
\affil[3]{Arizona State University}
\date{}
\fi

\ifNIPS
\title{ZKBoost: Zero-Knowledge Verifiable Training for XGBoost}
\author{}
\fi

\begin{document}

\ifNIPS
\maketitle
\fi

\ifFull
\begingroup
\renewcommand\thefootnote{\arabic{footnote}}

\footnotetext[1]{\email{nikolas@irif.fr}}
\footnotetext[2]{\email{{antigoni.polychroniadou, akira.takahashi}@jpmorgan.com}}
\footnotetext[3]{\email{{jiayixu7, chenkai.weng}@asu.edu}}
\endgroup
\fi
\ifICML
\twocolumn[
\icmltitle{ZKBoost: Zero-Knowledge Verifiable Training for XGBoost}
\icmltitlerunning{ZKBoost: Zero-Knowledge Verifiable Training for XGBoost}


\icmlsetsymbol{equal}{*}

\begin{icmlauthorlist}
\icmlauthor{Firstname1 Lastname1}{equal,yyy}
\icmlauthor{Firstname2 Lastname2}{equal,yyy,comp}
\icmlauthor{Firstname3 Lastname3}{comp}
\icmlauthor{Firstname4 Lastname4}{sch}
\icmlauthor{Firstname5 Lastname5}{yyy}
\icmlauthor{Firstname6 Lastname6}{sch,yyy,comp}
\icmlauthor{Firstname7 Lastname7}{comp}
\icmlauthor{Firstname8 Lastname8}{sch}
\icmlauthor{Firstname8 Lastname8}{yyy,comp}
\end{icmlauthorlist}

\icmlaffiliation{yyy}{Department of XXX, University of YYY, Location, Country}
\icmlaffiliation{comp}{Company Name, Location, Country}
\icmlaffiliation{sch}{School of ZZZ, Institute of WWW, Location, Country}

\icmlcorrespondingauthor{Firstname1 Lastname1}{first1.last1@xxx.edu}
\icmlcorrespondingauthor{Firstname2 Lastname2}{first2.last2@www.uk}

\icmlkeywords{Machine Learning, ICML}

\vskip 0.3in
]
\fi 

\ifFull
	\let\oldaddcontentsline\addcontentsline
	\def\addcontentsline#1#2#3{}
	\maketitle
	\def\addcontentsline#1#2#3{\oldaddcontentsline{#1}{#2}{#3}}
	
\fi


\ifICML

\printAffiliationsAndNotice{}  
\fi

\listoffixmes 
 	\begin{abstract}
\akira{Draft Mode}
Gradient boosted decision trees, particularly XGBoost, are among the most effective methods for tabular data. As deployment in sensitive settings increases, cryptographic guarantees of model integrity become essential. We present ZKBoost, the first zero-knowledge proof of training (zkPoT) protocol for XGBoost, enabling model owners to prove correct training on a committed dataset without revealing data or model parameters.
Naively re-executing XGBoost training in ZK would incur prohibitive costs, primarily due to the oblivious partitioning of training samples and unknown tree splits. Moreover, previous work on ZKP of training and inference had subtle security issues, such as leakage of tree topology and soundness gaps allowing cheating model providers to deviate from the correct execution of training and inference. We make two key contributions to address these challenges: (1) a generic zkPoT template for XGBoost that can be instantiated with any general-purpose ZKP backend, significantly improving prover costs compared to naive re-execution of the training process; and (2) a VOLE-based instantiation that overcomes the security issues of previous ZK proofs of training at minimal costs. 
To maximize efficiency, we develop a fixed-point version of XGBoost, which is particularly well suited for efficient instantiation of ZKP, and show it matches standard XGBoost accuracy to within 1\% on real-world datasets.
\end{abstract}

    \ifFull
        \clearpage
		\tableofcontents 
		\clearpage
	\fi

    \section{Introduction}
\label{sec:intro}



As reliance on ML models grows, concerns about integrity and accountability are increasing.
Today, when a model $\cM$ is deployed, there is no evidence that it was truly obtained by training on dataset $\cD$ with prescribed hyperparameters. 
This gap is important. 
A dishonest provider could ship a hand-crafted model, mix in unauthorized data, or skip parts of training.
Clients have no way to distinguish such shortcuts from genuine training, because model providers in practice are not willing to reveal proprietary training data or the model parameters due to privacy and intellectual property concerns.
An emerging approach to address this gap is zero knowledge proofs (ZKP) \cite{DBLP:journals/siamcomp/GoldwasserMR89}, which allows a \emph{prover} to convince a \emph{verifier} of the truth of a statement without revealing anything beyond its validity. 
In the context of machine learning, it enables \emph{zero knowledge Proof of Training} (zkPoT), allowing a provider (prover) to convince clients (verifier) that ``private model $\cM$ is the result of a public training algorithm $\mathsf{Train}$ on cryptographically committed dataset $\cD$'' without exposing either $\cD$ or intermediate training states. 

There has been tremendous progress recently in  (see Sec.~\ref{sec:related} for the literature review), especially for 
 neural networks \cite{DBLP:conf/ccs/AbbaszadehPK024,zkpot:rnn,trustless-audits}, 
 logistic regression \cite{DBLP:conf/ccs/GargGJMMPW23,zkpot:adria} and 
 ordinary decision trees \cite{DBLP:conf/ccs/Pappas024},
enabling a number of important applications: 
(i) \emph{trustworthy ML-as-a-service}, where clients obtain assurance of training integrity; 
(ii) \emph{decentralized ML}, where smart contracts can reward provable training on public data; and
(iii) \emph{compliance with data restrictions}, where clients verify that models were trained on approved data sources (e.g., census data) or that certain protected content (e.g., copyrighted material) was excluded \cite{trustless-audits}.
The last application is particularly powerful: by combining zkPoT with additional ZKP for data-compliance constraints, it prevents providers from arbitrarily manipulating data.


In this work, we turn our attention to zkPoT for gradient boosted decision trees, and in particular the XGBoost library~\cite{DBLP:conf/kdd/ChenG16}. 
XGBoost has become one of the most widely used methods for structured data, routinely outperforming deep networks on medium-sized tabular datasets~\cite{DBLP:conf/nips/GrinsztajnOV22,DBLP:journals/corr/abs-2408-14817}, dominating ML competitions, and seeing widespread adoption in finance (especially in fraud detection) and healthcare.
As such, ensuring integrity and provenance of XGBoost models has wide-reaching implications in practice.


\subsection{Our contribution} 
\label{sec:intro:contribution}

\begin{figure}[t]
    \centering
    \includegraphics[width=0.7\linewidth]{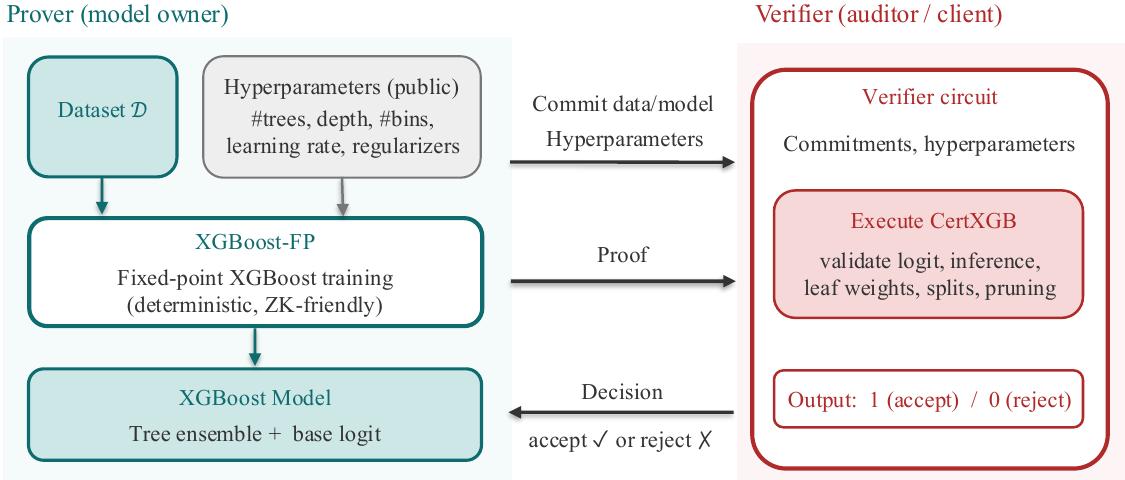}
    \caption{Overview of the ZKBoost protocol for certifying XGBoost training in zero-knowledge. To simplify, we show the owner of the model also owning the dataset but the dataset can belong to some other party in practice.}
    \label{fig:zkflow}
\end{figure}

We propose ZKBoost (illustrated in Figure~\ref{fig:zkflow}), the first zkPoT protocol allowing a model provider to prove that a classifier was correctly obtained by executing the XGBoost training algorithm on a private dataset.
We make two technical contributions, followed by a practical implementation and evaluation on standard benchmarks:

\noindent\textbf{Generic template for zkPoT of XGBoost.} 
To provide a general and optimized template for zkPoT of XGBoost, we first develop a \emph{certification algorithm} $\certxgb$ that verifies, given a resulting model $\cM$ and dataset $\cD$, that $\cM$ was correctly produced by XGBoost training---\emph{without re-executing the entire training procedure itself.}
The key insight is a fundamental restructuring of the training validation process without sacrificing accuracy. 
Naive re-execution of XGBoost training builds each tree \emph{top-down} (i.e., partitioning training samples data at each node from root to leaves) and \emph{sequentially} (with each tree depending on prior-tree predictions). 
The former requires expensive machinery if instantiated in ZKP  (e.g., oblivious data partitioning) and the latter prevents parallelization across trees.
Our $\certxgb$ instead verifies each tree \emph{bottom-up} and multiple trees \emph{in parallel}.
This restructuring decouples individual tree validation from inter-tree dependency checks, enabling parallel proof generation across trees and fundamentally changing the proof cost structure.
$\certxgb$ is general: it plugs into any ZKP backend for arithmetic circuits.
Our certification further supports validation of pre-binning and pruning (without leaking tree topology), which is crucial for practical XGBoost to prevent overfitting but unaddressed in prior zkPoT  for ordinary decision trees~\cite{DBLP:conf/ccs/Pappas024} (see Sec.~\ref{sec:related} for a comparison).

\noindent\textbf{Secure zkPoT instantiation with VOLE-ZKP.}
To instantiate zkPoT, we design a ZK protocol in which the verifier checks that the prover has faithfully executed $\certxgb$, thereby ensuring that the private tree ensemble $\cM$ was genuinely obtained by running XGBoost on the committed training data $\cD$.
Concretely, the prover first generates cryptographic commitments to $\cM$ and $\cD$, and both parties then jointly execute a state-of-the-art vector oblivious linear evaluation (VOLE)-based ZKP protocol~\cite{yang2021quicksilver,emp-toolkit}, which offers fast proving time with practical communication costs.
A subtle but critical challenge arises in this instantiation.
ZKP protocols typically emulate program execution as combinations of modular arithmetic over a finite ring, and a malicious prover may deliberately induce modular overflow in intermediate computations and exploit it  to make model $\cM$ appear valid to the verifier even if it was \emph{not} faithfully produced by XGBoost training (see Sec.~\ref{sec:zkp:nonlinear} for details).
To prevent such adversaries, we design sound ZKP subprotocols for common ML operations including \emph{comparison}, \emph{division}, and \emph{truncation}, alongside global \emph{range checks} of training traces.
These subprotocols are of independent interest, as the overflow vulnerability and our countermeasures are not specific to XGBoost and apply broadly to ZKML systems.

\noindent\textbf{Evaluation of ZKBoost and Fixed-point XGBoost.}
We implement ZKBoost in C++ and evaluate it on three real-world datasets: breast cancer, credit default, and covertype.
As no prior zkPoT for full XGBoost exists, we also compare against Sparrow~\cite{DBLP:conf/ccs/Pappas024}---the closest related work, which supports ordinary decision trees and random forests---by instantiating a simpler version of our protocol for random forest training.
Despite providing stronger security guarantees (tree-topology hiding and more rigorous soundness checks), our scheme is $3$--$6\times$ faster than Sparrow across dataset sizes.
Our implementation also resolves a gap between standard XGBoost and a program representation amenable to ZKP. 
Standard XGBoost relies on floating-point arithmetic, which is doubly problematic for zkPoT: it is incompatible with the finite-ring arithmetic underlying ZKP systems, and it is inherently non-deterministic---results can vary across hardware platforms, compilers, and rounding modes, so there is no unique training trace for the verifier to check against.
As a side contribution, we therefore implement and evaluate a \emph{fixed-point} version of XGBoost, building on standard approximation techniques from the privacy-preserving ML literature~(e.g., \cite{ESORICS:CatDeH10,EPRINT:JGBCGS23}).
We show empirically that fixed-point XGBoost matches standard floating-point XGBoost to within $1\%$ accuracy across all datasets and hyperparameter configurations.



    \section{Preliminary and Problem Statement}
\label{sec:prelims}



\smallskip
\noindent\textbf{Data Structures and Notation}
We use the bracket notation $[n]$ to denote the set $\{1,2,\ldots,n\}$ for a positive integer $n$.
Datasets are collections $\cD=\{(\vecx_i,y_i)\}_{i=1}^n$ with feature vectors $\vecx_i\in\mathbb{R}^d$ and binary labels $y_i\in\cY =\{0,1\}$.
In this paper, we use the following notation: 
$n$ as the number of data points in $\cD$; 
$m$ as the number of trees (i.e. weak learners) in $\cT = \{T_k\}_{k\in[m]}$;
$d$ as the number of features;
$B$ as the number of bins;
$h$ as the height of $T_k$;
$N=2^{h}$ as the number of leaves in $T_k$, assuming $T_k$ is a full binary tree (containing dummy nodes as explained in \ref{sec:cert}) with height $h$;\footnote{We focus on equal-width pre-binning and a canonical tree structure, which simplifies certification but differs from some production XGBoost features (e.g., quantile histograms or specialized missing-value handling).}
$N-1$ as the number of non-leaf nodes in $T_k$;
$\gamma,\lambda$ as the regularization parameters;
$\eta$ as the learning rate.
To index various quantities, we use
 $k$ to index trees;
 $i$ to index data points;
 $f$ to index features;
 $b$ to index bins;
 $\ell\in[2N-1]$ to index nodes in a tree;
 $l\in[N]$ to index leaf nodes in a tree;
 $j$ to index others (e.g., level of a node).
With this notation, each tree $T_k$ is represented as a tuple $(\vecf_k, \vectt_k, \vecw_k)\in[d]^{N-1} \times \RR^{N-1} \times \RR^{N}$. 
For each non-leaf node $\ell\in[N-1]$, $f_{k,\ell}$ indicates the feature index used for splitting, and $t_{k,\ell}$ indicates the threshold.
Each leaf node $l\in[N]$ contains a weight $w_{k,l}$. 
For each data point $\vecx_i$, $T_k$ classifies it into a leaf node $l_{k,i}$ by traversing the tree from the root to a leaf node according to the feature values of $\vecx_i$.


\subsection{XGBoost}\label{sec:xgb-overview}
We focus on the high-level training workflow.
We defer more comprehensive background to Section~\ref{app:xgb} and the original paper \cite{DBLP:conf/kdd/ChenG16}.
A decision tree maps an input feature vector $\vecx\in\mathbb{R}^d$ to a prediction by recursively partitioning the feature space using threshold-based splits and assigning a value at each leaf.
While such models are simple and interpretable, a single tree often lacks sufficient expressive power.
Gradient boosting addresses this limitation by constructing an \emph{ensemble} of trees $\cT = \{T_k\}_{k\in[m]}$ sequentially, where each new tree is trained to correct the residual errors of the current model.
XGBoost~\cite{DBLP:conf/kdd/ChenG16} is a widely used and optimized implementation of this paradigm, relying on first and second order loss derivatives to guide split selection and leaf-weight assignment across multiple boosting rounds. As a result, each stage of training depends on the predictions produced by all previous trees, making the training process inherently stateful.
This dependency highlights why certifying XGBoost training is substantially more challenging than certifying inference or the construction of a single decision tree.

\subsection{Zero-Knowledge Proof of Training}

\noindent\textbf{Zero-Knowledge Proofs and Commitments.}
ZKP (of knowledge) allows one party (a \emph{prover}) to prove knowledge of a \emph{secret witness} $\wit$ for a \emph{public statement} $\stm$ to another party (\emph{verifier}). 
Such proofs are constructed for a concrete \emph{NP relation} $\rel$, describing the relationship between $\stm$ and $\wit$. 
Formally, $\ZKP$ for an NP relation $\rel$ is a tuple of interactive Turing machines $(\cP,\cV)$, where $\cP$ is prover and $\cV$ is verifier.
Then $\cP$ and $\cV$ interact with each other, where both $\cP$ and $\cV$ take $\stm$ as common inputs, and $\cP$ additionally takes $\wit$ as a private input. 
At the end of interaction, $\cV$ outputs a binary $b$.
Proof systems that are used in verifiable ML typically require the following security properties: For an NP relation $\rel$, they must provide \emph{completeness} (i.e., if prover and verifier follow the protocol with input $(\stm,\wit)\in\rel$, verifier always accepts), \emph{knowledge soundness} (i.e., if verifier accepts, then it must be that prover owns a valid witness $\wit$ satisfying given NP relation w.r.t. statement $\stm$), and \emph{zero knowledge} (i.e., the transcript of the interaction between the prover and the (malicious) verifier leaks nothing except that there exists a witness $\wit$ such that $(\stm, \wit) \in \rel$). 
ZKP is typically paired with cryptographic \emph{commitments} to secret witness $\wit$, 
ensuring that the prover is bound to a specific value while keeping it hidden from the verifier.
Formally, a commitment scheme is defined as an algorithm $\commit$ that allows committing to a message $m$ with randomness $r$: $c\gets\mathsf{Com}(m;r)$.
As these are standard building blocks in cryptography, we refer the reader to \cite{Goldreich2001Foundations} for more details.

\noindent\textbf{Zero-Knowledge Proof of Training (zkPoT).}
Once $\ZKP$ and commitment schemes with the above properties are given, we can define a secure zkPoT for a training algorithm $\trainfp$ that takes a dataset $\mathcal{D}$ as input and outputs a model $\mathcal{M}=\trainfp(\mathcal{D})$.
For training verification, we define the relation

\newcommand{\relxgb}{R_{\text{xgb}} }
\begin{align}
\small
\relxgb = \setpred{((\cC_\cM, \cC_\cD), (\cM,\cD,r,\rho))}{\cC_\cM = \commit(\cM;r); \cC_\cD = \commit(\cD;\rho); \cM = \trainfp(\cD)}    \label{eq:rel}
\end{align}
A zkPoT protocol ensures that: (1) Completeness: An honest prover with dataset $\mathcal{D}$ and model $\mathcal{M}=\trainfp(\mathcal{D})$ can produce a valid proof. (2) (Knowledge) Soundness: Any prover that outputs a valid proof must ``know'' such a dataset $\cD$ and a valid model $\cM$ derived from $\cD$ via $\trainfp$. (3) Zero-Knowledge: The proof leaks no information about $\mathcal{D}$ or $\mathcal{M}$ beyond the commitments.
To formalize the above properties, we use the so-called \emph{simulation-based security} defined in terms of a commit-and-prove ideal functionality (see Appendix~\ref{sec:cpzkp} for details). 

    \section{Design of zkPoT for XGBoost}
\label{sec:overview}
\subsection{zkPoT Template for XGBoost Training}
\label{sec:cert}
We first introduce our certification algorithm $\certxgb$ for verifying the correctness of XGBoost training, and use it to describe a generic zkPoT template (Protocol~\ref{prot:zkp-training-short}) for relation $\relxgb$ \eqref{eq:rel}.
Essentially, $\certxgb$ represents a circuit that 
takes a dataset $\cD = \{(\vecx_i,y_i)\}_{i=1}^n$ and a claimed model $\cM = (\cT, z_0)$ as input, and then outputs 1 (``accept'') or 0 (``reject'') by checking that the model $\cM$ is correctly produced by training on $\cD$. 
The complete procedure is deferred to Appendix~\ref{sec:xgboost-circuit}.

\noindent\textbf{Overview} 
Recall that the original XGBoost (Appendix~\ref{app:xgb}) essentially proceeds as follows: 
\begin{enumerate}[leftmargin=*]
\item Initialize the base score (logit) $z_0$ from $\cD$.
\item \label{step:tree-build} \textbf{Iterative Tree Building.} For each boosting round $k=1,\ldots,m$, it grows a new tree $T_k$ \emph{from root to leaves} by (a) computing gradients and Hessians, (b) finding the best splits maximizing the $\gain$, and (c) computing leaf weights. Finally, it computes the updated scores $\vecz_k = (z_{k,i})_{i=1}^n$.
\end{enumerate}

Instead of naively re-executing the above training procedure, $\certxgb$ takes advantage of the fact that the final model $\cM = (\cT, z_0)$ is provided as input.
At a high level, $\certxgb$ verifies that each tree $T_k$ in $\cT$ is correctly built by shuffling the order of operations in Step~\ref{step:tree-build} above: it first evaluates $T_k$ on every data point $\vecx_i$ to determine the reached leaf indices $l_{k,i}$ and update the score $z_{k,i}$, and then reconstructs the histograms of gradients and Hessians at each node using these indices and scores \emph{from leaves to root}.
Finally, it checks that each split in $T_k$ maximizes the gain computed from the reconstructed histograms, and that each leaf weight is consistent with the corresponding histogram.
In slightly more detail, $\certxgb$ verifies that $(\cT, z_0)$ is the output of $\trainfp$ on input $(\vecx, \vecy)$, following the steps below. 
Note that Step~\ref{step:tree-validation}, dominating the validation costs, can be parallelized, because each tree validation is independent of others and inter-tree dependencies can be handled separately in Step~\ref{step:initialize-scores}.
See Section~\ref{app:fixedpointxgb}.
The reader may jump to the detailed procedure by clicking each subroutine.
\begin{enumerate}[leftmargin=*]

\item 
$\validatelogit$:
To validate the base score $z_0$, $\certxgb$ computes the initial prediction probability $p = \frac{1}{n}\sum_{i=1}^n y_i$ and then converts it to the logit $z'_0=\sigma^{-1}(p)=\log(p/(1-p))$.
The circuit asserts that $z_0 = z'_0$ and sets $z_{0,i} = z_0$ for all $i\in[n]$.

\item \label{step:initialize-scores} \textbf{Initialize intermediate scores.}
For each tree $k\in[m]$ and each data point $i\in[n]$, $\certxgb$ evaluates the tree $T_k=(\vecf_k,\vectt_k,\vecw_k)$ on $\vecx_i$ to compute the reached leaf $l_{k,i}\in[N]$, and updates the score $z_{k,i} \gets z_{k-1,i}-\eta\cdot w_{k,l_{k,i}}$. 

\item \label{step:tree-validation} \textbf{Tree Validation.} For each tree $k\in[m]$, do:
\begin{enumerate}[leftmargin=*]
\item $\validateinference$: 
Validate that the computed leaf indices $\vecl_k = (l_{k,i})_{i=1}^n$ are correct by checking that each $l_{k,i}$ is consistent with the feature splits $(\vecf_k,\vectt_k)$ and the input $\vecx_i$.

\item 
$\inithists$: 
Computes histograms $(G_k,H_k)$ by aggregating gradients and hessians over bins and leaves using the bin indices and leaf indices $\vecl_k$.
Here, when computing the gradients $g_i = p_i - y_i$ and hessians $h_i = p_i(1-p_i)$, $\certxgb$ uses the prediction probabilities computed via sigmoid: $p_i =\sigma(z_{k-1,i}) = 1/{(1 + e^{-z_{k-1,i}})}$.
\item 
$\validateleafweights$:
Validate the leaf weights $\vecw_k$ by checking consistency with the histograms $(G_k,H_k)$.

\item 
$\validatesplits$:
Validate the splits $(\vecf_k,\vectt_k)$ by ensuring (1) the gain derived from each split is maximal, \emph{or} (2) if the gain is $\leq 0$ (i.e., pruning condition is met), the corresponding node contains dummy values. 
\end{enumerate}
\end{enumerate}

In Appendix \ref{sec:xgboost-circuit}, we prove the following claim:

\noindent\textbf{Claim}~\ref{claim:cert-xgboost} \textit{The certificate algorithm $\certxgb$ verifies the correctness of XGBoost training.
	That is, for any $\cD=(\vecx,\vecy), \cT = \{T_k\}_{k\in[m]}$ and $z_0$, $\certxgb(\vecx,\vecy,\cT,z_0) = 1$ if and only if $(\cT,z_0)=\trainfp(\vecx,\vecy)$.
}

\noindent\textbf{Pruning without leaking tree structure}
XGBoost prunes internal nodes (marking them as leaves) to prevent overfitting when the maximum gain is $\leq 0$; revealing which nodes are pruned leaks tree topology to the verifier. 
We replace pruned nodes with dummy values $(\fdummy,\tdummy)$: $\fdummy\in[d]$ can be an arbitrary feature, and $\tdummy$ is set to a vacuous constant (e.g., $\texttt{DBL\_MIN}$) so that any sample goes right.
$\validatesplits$, if instantiated with ZKP, can then check obliviously that dummies are used exactly when the pruning condition holds.

\noindent\textbf{Fixed-point arithmetic and approximation of non-linear operations.} While the above procedure describes the high-level logic of $\certxgb$, 
the actual implementation involves careful handling of fixed-point arithmetic and approximations to ensure that all computations are sound and efficient in a proof system. 
For example, the sigmoid $\sigma(z_{k-1,i})$ is implemented as a piecewise-linear approximation, and all divisions are reformulated to avoid floating-point operations. 
The complete details are provided in Appendix~\ref{app:fixedpointxgb}.


\newcommand{\relxg}{\cR_{\text{xg}}}


\vspace{-5pt}
\begin{protocol}[label={prot:zkp-training-short}]{zkPoT for XGBoost}

\textbf{Parameters}: Number of points $n$, features $d$, trees $m$, bins $B$, leaves $N=2^h$ at depth $h$, learning rate $\eta$, regularizers $\lambda,\gamma$.

\textbf{Public Input}: Auxiliary commitments $\com_{\cD}$ to training data $\cD = (\vecx,\vecy)$, $\com_{\cT}$ to trees $\cT=\{T_k\}_{k\in[m]}$ with $T_k=(\vecf_k,\vectt_k,\vecw_k)$, and $\com_{z}$ to base logit $z_0$.

\textbf{Private Input of $\cP$}: Fixed-point training data $(\vecx, \vecy)$, the fitted tree ensemble $\cT$, auxiliary traces, and commitment randomness.

\medskip
\noindent\textbf{Validate Commitments.} $\cP$ and $\cV$ run a subprotocol for proving knowledge of commitment openings: $\com_\cT=\commit(\cT;r_\cT), \com_z=\commit(z_0;r_z), \com_\cD = \commit(\cD;r_\cD)$.

\medskip
\noindent\textbf{Validate Binary Input.} $\cP$ and $\cV$ run a subprotocol for asserting $y_i\in\{0,1\},\forall i\in[n]$ 

\medskip
\noindent\textbf{Compute Edges and binID.}
$\cP$ and $\cV$ run a subprotocol for $(\edges,\binID)\gets\prebinfp(\vecx,B)$, where $\edges$ are bin left edges and $\binID$ are bin indices for each feature of each sample; expose authenticated $(\edges,\binID)$ to be reused throughout. 

\medskip
\noindent\textbf{Validate Base Logit.}
$\cP$ and $\cV$ run a subprotocol for $\validatelogit(\vecy,z_0)=1$ to validate the initial logit $z_0=2\,\mathrm{atanh}(2p-1)$ with $p=\frac1n\sum_i y_i$. They then define the replicated vector $\vecz_0 = (z_{0,i})_{i=1}^n$ with $z_{0,i} = z_0$.

\medskip
\noindent\textbf{Initialize Intermediate Scores.}
For each tree $k\in[m]$ and data point $i\in[n]$, $\cP$ computes the reached leaf $l_{k,i}\in[N]$ by evaluating $T_k$ on $\vecx_i$, updates the score $z_{k,i} \gets z_{k-1,i}-\eta\cdot w_{k,l_{k,i}}$, and defines $\vecz_k \gets (z_{k,i})_{i=1}^n$.
Let $\vecl_k = (l_{k,i})_{i=1}^n$.

\medskip
For each tree $k\in[m]$, perform the following: 

\medskip
\noindent\textbf{Validate Inference.}
$\cP$ and $\cV$ run a subprotocol for $\validateinference(\vecx,\vecf_k,\vectt_k,\vecl_k)=1$ to validate $\vecl_k$.

\medskip
\noindent\textbf{Initialize Node Histograms.} 
$\cP$ and $\cV$ run a subprotocol for $(G_k,H_k)\gets\inithists(\vecz_{k-1},\vecy,\binID,\vecl_k)$, where $G_k[f][\ell][b]$ and $H_k[f][\ell][b]$ are the sum of gradients and hessians of samples in node $\ell$ whose feature $f$ falls into bin $b$.
These are computed by summing over samples in each leaf $l$ and propagating them up to the root.

\medskip
\noindent\textbf{Validate Leaf Weights.}
$\cP$ and $\cV$ run a subprotocol for $\validateleafweights(G_k,H_k,\vecw_k)=1$ to validate all leaf weights $\vecw_k$.

\medskip
\noindent\textbf{Validate Tree Splits.}
$\cP$ computes $(\vecf_k,\vectt_k)\gets (f^*_{k,\ell},t^*_{k,\ell})_{\ell\in[N-1]}$, where $(f^*_{k,\ell},t^*_{k,\ell})$ for $\ell$-th internal node is a split maximizing the gain derived from $G_k,H_k$.
$\cP$ and $\cV$ run a subprotocol for $\validatesplits(G_k,H_k,\vecf_k,\vectt_k,\vecf^*_k,\vectt^*_k,\edges)=1$ to check that (1) the gain derived from each split in $(\vecf_k,\vectt_k)$ is maximal, \emph{or} (2) if the gain is $\leq 0$, the corresponding node contains dummy values. 
\end{protocol}

    \subsection{ZKP Gadgets for Nonlinear Relations}
\label{sec:zkp:nonlinear}


In Sections \ref{sec:cert} and \ref{app:fixedpointxgb} we build certification relation tailored to XGBoost and its fixed-point implementation ($\trainfp$).
Next, we tackle the proof of low-level nonlinear functions that dominate the overhead of zkPoT for $\trainfp$.
Denote $\auth{\cdot}$ as the witnesses committed by $\cP$, the proof needs  
\begin{itemize}[leftmargin=*]
    \item The proof of comparison between two signed fixed-point numbers, denoted as $\auth{s} \leftarrow \bm{1}\{\auth{x} < \auth{y}\}$. It proves $s = 1$ if $x < y$, and $s = 0$ otherwise. 
    \item The proof of division between two fixed-point numbers, denoted as $\auth{z} \leftarrow \auth{x} / \auth{y}$.
    \item The proof of truncation after the multiplication of two fixed-point numbers, denoted as $\auth{y} \leftarrow {\sf Trunc}(\auth{x})$, where $\auth{y} = \lfloor \auth{x} / 2^f \rfloor$ assume $f$-bit precision.
    \item The proof of histogram construction denoted as $[T] \leftarrow {\sf Histogram}(\auth{\bm{a}}, \auth{\bm{v}})$ which sets the $i$-th histogram bucket $\auth{T_i} = \sum v_j \cdot \bm{1}\{a_j = i\}$.
\end{itemize}
Existing approaches do not satisfy our goal due to efficiency, vulnerability or lack of support for fixed-point arithmetic. We design rigorous constraint systems to verify these operations with the focus of both efficiency and soundness. We sketch our main contribution in a high level and defer the formalized description and the instantiation of other components to Appendix~\ref{sec:app:zkxgboostfp}.



\noindent\textbf{Comparison.}
We improve the idea of~\cite{hao2024scalable} to prove the comparison relation by bits-decomposition. Note that we use a signed representation. Hence, we have $\bm{1}\{\auth{x} < \auth{y}\} = {\sf MSB}(\auth{x-y})$, where {\sf MSB} refers to the most significant bit of $z = x - y$. The main task is shifted to proving $\auth{s} = {\sf MSB}(\auth{z})$ given committed $(\auth{z},\auth{s})$. 
We describe prior approach in detail in Appendix~\ref{sec:app:comparison}, and focus on improving its consistency checks on invalid bits-decomposition.


We abstract out the problem in the following way. The proof of bit-decomposition requires $\cP$ to commit to groups of bits $({z}_0,\dots,{z}_{t-1})$ each of $d$-bit, and show that $z = 2^{n-1} \cdot s + \sum_{i=0}^{t-1} 2^{id} \cdot {z}_i$. However, a cheating $\cP$ may commit to incorrect $(\tilde{z}_0,\dots,\tilde{z}_{t-1})$ such that
$z + p = 2^{n-1} \cdot s' + \sum_{i=0}^{t-1} 2^{id} \cdot \tilde{z}_i$, which would completely flip the MSB $s$. 
In the context of zkPoT for XGBoost, this is particularly damaging: comparison results determine whether one split gain is greater than another, so a flipped MSB could allow a cheating $\cP$ to commit to a tree split that does \emph{not} achieve the maximum gain---passing verification while certifying a model that deviates from the prescribed training algorithm.

Our solution employs a Mersenne prime in the form of $p = 2^n - 1$. The only chance a cheating $\cP$ has is when $z=0$, it may claim $s=1$ and $\tilde{z}_i = 2^d-1$ for all $i \in [0,t)$.
By defining $\auth{w} = \bm{1}\{\auth{z} \neq 0\}$, we observe that
$
{\sf MSB}(z) = \begin{cases}
   0 &\text{if } z = 0, w = 0 \\
   s &\text{if } z \neq 0, w = 1
\end{cases}
$.
It implies that $(1 - w) \cdot s = 0$ should always be true. Additionally, we employ the non-equality-zero check from~\cite{SP:PHGR13} to prove $\auth{w} = \bm{1}\{\auth{z} \neq 0\}$. Our solution to address the soundness issue only requires committing 2 extra values and proving 3 multiplicative relations, while the previous work almost costs $2\times$~\cite{hao2024scalable}. 

\noindent\textbf{Division and Truncation.} Proof of division takes inputs $(\auth{x},\auth{y},\auth{z})$ and proves that $z = \lfloor x / y \rfloor$. 
In our $\trainfp$, the numerators and denominators may range from $0$ to nearly $p/2^f$. Previous proofs of division are not suitable since they either only work for bounded small values~\cite{CCS:LiuXieZha21,DBLP:conf/ccs/Pappas024} or requires thousands of logical constraints~\cite{weng2021mystique}.

To prove the relation, we leverage the existence of a residue $r \in [0,y)$ such that $x = z\cdot y + r$. Additionally, two range checks are required to maintain the soundness: $r \in [0,y)$, and $z \in [0,\lfloor p/y \rfloor]$. The first check ensures $r$ is a residue. The second check is needed to prevent the similar wrap-around issue happened in the proof of MSB. Namely, there could be many possible pairs of $(z', r')$ such that $x = z' \cdot y + r' \; {\sf mod} \; p$. The proof should ensure a $z$ that satisfies $x = z \cdot y + r$ without modulo $p$.
The truncation is a simplified division with public and positive denominators $y = 2^f$. 



\noindent\textbf{Proof of Histogram Construction.} Our proof commits to a vector $\{(a_j,v_j)\}_{j\in[N]}$ and a histogram $T = \{T_i\}_{i\in[n]}$, and proves $\auth{T_i} = \sum v_j \cdot \bm{1}\{a_j = i\}$. A naive approach following the previous proof of random forest~\cite{DBLP:conf/ccs/Pappas024} is to utilize a ZK-RAM: for $j \in [N]$, prove a read at location $a_j$, increment its value by $v_j$, and write it back. It incurs $O(N+n)$ overhead with a large constant~\cite{yang2024two}.

We observe that the proof of histogram construction is simpler than ZK-RAM since it only consists of \emph{write} operation and the operator is public. Hence, it can be proven by a special weighted LogUp proof by showing an equation $\sum_{j\in[N]} \frac{v_j}{X + a_j} = \sum_{i\in[n]} \frac{T_i}{X + i}$ is correct with respect to a variable $X$~\cite{DBLP:journals/iacr/Habock22a}. In our implementation, this reduces the overhead by around $3\times$.

\noindent\textbf{Global Range Checks for Soundness.} To ensure the soundness, global range check is needed for every intermediate values that are committed during the proof. This is because a cheating prover may leverage the overflow to cause inconsistency in numerical operations. We carefully analyze the range of witnesses and implement the range checks throughout the protocol. 

\noindent\textbf{Instantiation and Security.} Our ZKP protocol can be instantiated by any general-purpose proof system. To achieve the maximum prover efficiency, we realize it in the VOLE-ZK framework~\cite{yang2021quicksilver}. We state our main theorem of security below and provide details and security analysis in Appendix~\ref{sec:app:zkp:security}.

    

\noindent\textbf{Theorem 1.} \textit{Define the relation by the proof of fixed-point XGBoost Training described in $\relxgb$~\eqref{eq:rel},  Protocol \ref{prot:zkp-training-short} securely realizes $\Fcp$ in the $(\Func{CVOLE},\Func{ZK})$-hybrid model.}

    \section{Experiments}
\label{sec:experiments}

In this section, we first analyze how fixed-point representation affects the accuracy and efficiency of the XGBoost training in plaintext. 
Secondly, we benchmark the performance of our ZKBoost implementation in terms of the proving time, bandwidth usage,  and memory usage, and compares it with the baseline scheme. 

\subsection{Fixed-point XGBoost}

\noindent\textbf{Setup.}
We compare our fixed-point arithmetic gradient boosting implementation (``Fixed'') against \textsc{XGBoost} (scikit-learn API), using identical hyperparameters (depth, \texttt{bin}=128, $\eta= 0.3$, $\lambda=1$, $\gamma=0$) and identical train/test splits per configuration. Each result reported is an average over 10 runs and we observed negligible variance. We evaluate \emph{equal-thread} runs with both OpenMP and BLAS pinned to one thread. 

\noindent\textbf{Datasets.}
We use four standard binary classification benchmarks: \textbf{Breast Cancer} $(n=569, d=30)$, \textbf{Default of Credit Card Clients} $(n= 30001, d=23)$, two \textbf{Covertype} subsets (converted for binary classification), \textbf{Covertype 50k} $(n=50000, d=54)$ and \textbf{Covertype 100k} $(n=100{,}000, d=54)$, and \textbf{Adult} ($n=45222$, $d=104$). We abbreviate the first three to \textbf{BR}, \textbf{CR}, \textbf{CO}.

\noindent\textbf{Accuracy parity.}
Across all benchmarks and the tested hyperparameter grid, our fixed-point training \emph{tracks} floating-point XGBoost to within statistical error. Every configuration satisfies $\accgap \le 1\%$. Notably, even on deeper/larger ensembles that are not presented here, the maximum observed gap remains under $0.01$. These results confirm that adopting fixed-point arithmetic, together with our logit initializer, piecewise sigmoid, and fixed-point gain computation, does not degrade predictive accuracy in practice.
Due to space constraints we defer the tables to Appendix~\ref{app:exptab} and the detailed performance discussion in Appendix~\ref{app:expxgboost}.


\subsection{Zero Knowledge XGBoost-FP}
\label{zkexp}

\noindent\textbf{Implementation and Setup.} We implement the ZK XGBoost-FP 
in C++ over the VOLE-ZK framework~\cite{yang2021quicksilver} based on EMP-toolkits~\cite{emp-toolkit}. All experiments only employ 1 thread and report end-to-end running time. The prover and verifier are each hosted by a AWS EC2 m5.2xlarge instances located in the same region. Each is equipped with 8 vCPUs and 32GB RAM. To simulate various network condition, we use the Linux tc tool to configure the bandwidth and latency. Specifically, we emulate a local area network (LAN, 5Gbps bandwidth) and a wide-area network (WAN, 1Gbps bandwidth and 60ms round-trip latency).

\smallskip
\noindent\textbf{Building Blocks.} We improved the proof of comparison (against wrap-around attack) and histogram construction. The former makes $5.3\times$ improvement in LAN and $2.2\times$ in WAN. The latter makes $4\times$ improvement in LAN and $3\times$ in WAN. Details in Appendix~\ref{sec:appendix:perf:zkp}.

\smallskip
\noindent\textbf{Benchmarking Datasets.} We benchmark the overhead of proving the XGBoost training over datasets ${\sf BR, CR, CO}$ and show the results in Table~\ref{tab:overhead:datasets} (left). With the large 60ms latency, the WAN running time is $83\%$ to $87\%$ more than it LAN due to VOLE interactions. However, all proofs are generated in a reasonable amount of time. Note that as demonstrated in our accuracy tests, training with $50$ to $100$ trees already achieves high accuracy. Hence, the results in Table~\ref{tab:overhead:datasets} can be viewed as an upper bound of runtime in the real world. The overall performance on general-purpose servers also demonstrates that our lightweight scheme is suitable for commodity hardware.

\begin{table}[t]
  \centering
  \footnotesize
    \caption{\small \textbf{Left}: Running time in LAN and WAN (minutes), communication overhead (GB), and memory usage (GB). Three datasets are ${\sf BR, CR, CO}$. With 100 trees of depth 5. \;\;\;\;
    \textbf{Right}: Running time for the proof of random forest training (minutes). Parameters are dataset size $n\in\{2^{14},2^{16},2^{18}\}$, $h=5,m=64,d=16$. Sparrow's bagging time ($30\%$) is taken out.}
  \begin{tabular}{@{}ccccc || cccc@{}}
  \toprule
  & LAN & WAN & Communication & Memory & $n$ & $2^{14}$ & $2^{16}$ & $2^{18}$ \\\midrule
  ${\sf BR}$ & 18.65 & 31.93 & 52 & 0.284 & Sparrow & 17.64 & 45.01 & 155.19 \\
  ${\sf CR}$ & 21.81 & 37.64 & 61 & 0.433 & Ours, LAN & 5.52 & 10.01 & 25.96 \\
  ${\sf CO}$ & 52.53 & 93.15 & 154 & 0.710 & Ours, WAN & 9.76 & 18.38 & 52.69 \\
    \bottomrule
  \end{tabular}
  \label{tab:overhead:datasets}
\end{table}

\smallskip
\noindent\textbf{Running Time with Respect to Number of Trees.} We demonstrate the relation between the running time and the number of trees ranging from $25$ to $200$. We use the largest ${\sf CO}$ dataset with 40k data points, 54 features, and tree-depth 5. The result in Figure~\ref{fig:perf} (left) shows that the running time is strictly linear to the number of trees. We use 1 thread for consistency with common ZKP benchmarks. As discussed in Sec.~\ref{sec:intro:contribution}, our scheme is friendly to multi-threading and can parallelize the proving of each tree.

\begin{figure}[t]
    \centering
    \includegraphics[width=0.7\linewidth]{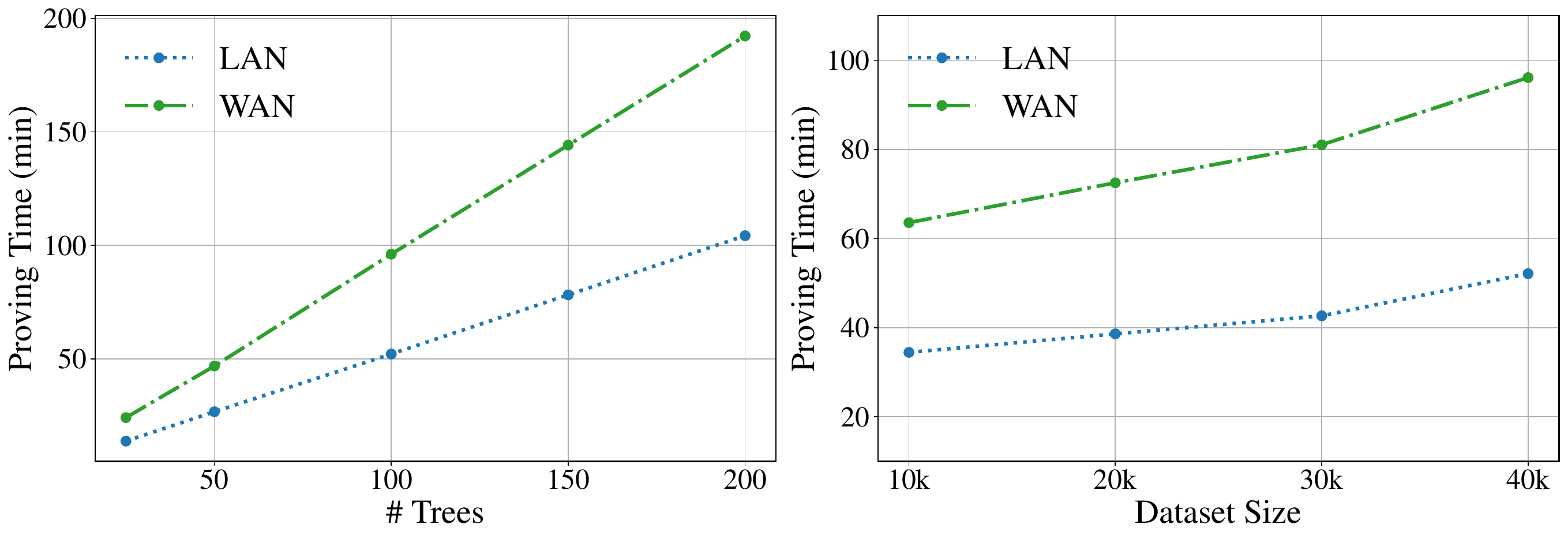}
    \caption{Benchmarking ${\sf CO}$ with increased number of trees (left) or data points (right).}
    \label{fig:perf}
\end{figure}

\smallskip
\noindent\textbf{Running Time with Respect to Dataset Size.} We benchmark training using a high-dimension dataset ${\sf CO}$ with increasing number of data points, ranging from 10k to 40k. The result in Figure~\ref{fig:perf} (right) shows that the running time is not significantly impacted by the number of data points. This is due to the efficient proof of histogram construction mentioned in Section~\ref{sec:zkp:nonlinear}, which eliminates the bottleneck on arranging data points to histogram cells. It also demonstrate that our scheme can prove the training on much larger datasets without significant degradation on performance.

\smallskip
\noindent\textbf{zkPoT for Random Forest.} Our scheme is the only ZKP for XGBoost training. To compare with a baseline, we simplify it to prove random forest training and compare with Sparrow~\cite{DBLP:conf/ccs/Pappas024}. The detailed approach is deferred to Appendix~\ref{sec:forest-circuit} and we provide a qualitative comparison in \ref{sec:related}. 
Sparrow is a memory-efficient zk-SNARK which supports public verifiability, our protocol is interactive. However, our scheme provides a few new features. Firstly, it hides tree topology by filling a full binary tree with dummy nodes for better privacy. Secondly, it enforces more rigorous consistency checks for nonlinear proofs (see Section~\ref{sec:zkp:nonlinear}). 
These additional guarantees lead to a larger circuit and, consequently, higher proof complexity. 

Additionally, we observe that the ZK-RAM used by Sparrow for proving the read/write operation to histograms can be simplified to a proof of subset. It can also be viewed as a special case of our proof of histogram construction with the \emph{write} value to always be 1. This method helps our scheme gain significant advantage since building the leaf-layer histogram is the main bottleneck for large datasets.

Table~\ref{tab:overhead:datasets} (right) shows the running time of our proof of random forest training compared to Sparrow~\cite{DBLP:conf/ccs/Pappas024}. Our scheme achieves better performance across all tests in both LAN and WAN with up to $6\times$ faster in running time.

    \section{Conclusion}
\label{sec:conclusion}

We introduced the first zkPoT for gradient boosted decision trees, realized concretely for XGBoost. 
Our current instantiation targets binary classification, covering high-stakes applications such as fraud detection, credit risk prediction, and mortality prediction.
Two natural directions for future work are (i) extending to multi-class XGBoost, which is conceptually straightforward via our framework (see Appendix~\ref{sec:multi-class}), and (ii) replacing the interactive VOLE-based instantiation with a non-interactive succinct proof system to enable public verifiability at the cost of higher proving time.
    \ifICML
    \section*{Impact Statement}
This paper advances the field of machine learning by introducing cryptographic techniques for certifying the correct training of gradient-boosted decision tree models in zero knowledge. The primary intended impact of this work is to improve accountability, auditability, and trust in machine learning pipelines, particularly in settings where models are trained on sensitive or proprietary data and must be validated by external parties without revealing that data.

By enabling verifiable proofs that a model was produced by a prescribed training procedure on a committed dataset, this work may support applications such as post-hoc auditing, regulatory compliance, and model provenance verification in high-stakes domains including finance, healthcare, and public-sector data analysis. More broadly, it contributes technical foundations for trustworthy machine learning systems, where correctness of training is treated as a verifiable property rather than an assumption.

The techniques developed here could be misused to provide cryptographic assurances for models trained on datasets or objectives that are themselves ethically problematic. As with other verifiable computation tools, zero-knowledge proofs attest to \textit{correct execution}, not to the desirability or fairness of the underlying task. The responsibility for defining appropriate training objectives, datasets, and deployment contexts therefore remains with system designers, regulators, and practitioners.

We do not anticipate immediate negative societal impacts arising directly from this work. Its practical deployment requires significant cryptographic expertise and computational resources, which limits casual misuse. Overall, we view this contribution as a step toward more transparent and accountable machine learning systems, while recognizing that cryptographic verification mechanisms must ultimately be embedded within broader institutional and ethical frameworks to ensure responsible use.

    \fi
    
	\ifAnon
    \else
    \section*{Acknowledgments}
    We would like to thank Daniel Escudero for useful discussions during the early stages of this project. Nikolas Melissaris is supported by ERC grant OBELiSC (101115790).

    \section*{Disclaimer}
    This paper was prepared in part for information purposes by the Artificial Intelligence Research group of JPMorgan Chase \& Co and its affiliates (“JP Morgan”), and is not a product of the Research Department of JP Morgan. JP Morgan makes no representation and warranty whatsoever and disclaims all liability, for the completeness, accuracy or reliability of the information contained herein. This
    document is not intended as investment research or investment advice, or a recommendation, offer or solicitation for the purchase or sale of any security, financial instrument, financial product or service, or to be used in any way for evaluating the merits of participating in any transaction, and shall not constitute a solicitation under any jurisdiction or to any person, if such solicitation under such jurisdiction or to such person would be unlawful.
    \fi
	\bibliography{manualref,cryptobib/abbrev3,cryptobib/crypto}
	\ifFull
    	\bibliographystyle{myalpha}
		\appendix
	\fi
	\ifICML
		\bibliographystyle{icml2026}
		\appendix
		\onecolumn
	\fi
    \ifNIPS
        \bibliographystyle{ieeetr}
		\appendix
    \fi
	
	\section{Additional Preliminaries}
\label{sec:more-prelim}
\subsection{Further Details of XGBoost}
\label{app:xgb}
\subsubsection{Theoretical Background} XGBoost is a scalable implementation of gradient boosted decision trees. Given a dataset $\mathcal{D}$ and a differentiable convex loss function $\ell:\mathbb{R}\times\mathcal{Y}\to\mathbb{R}$, the algorithm trains an ensemble of $m$ regression trees $\cT = \{T_k\}_{k=1}^m$ such that the final prediction is
\[
p = \sigma(z) \qquad z = z_0 - \sum_{k=1}^m \eta \cdot w_{k,i}
\]
where $\sigma$ is the sigmoid function, $z$ is the final logit, $w_{k,i} = T_k(\vecx_i)$ is the output of tree $T_k$ on input $\vecx_i$, $\eta\in(0,1]$ is the learning rate, and $z_0$ is the initial prediction (logit). 
The training algorithm builds trees iteratively to minimize the regularized objective. 
At each boosting round $k$, the algorithm computes the first- and second-order derivatives of the loss:
\[
g_{k,i} = \frac{\partial}{\partial p_{k-1}} \ell(p_{k-1},y_i), 
\quad
h_{k,i} = \frac{\partial^2}{\partial p_{k-1}^2} \ell(p_{k-1},y_i),
\]
where $p_{k-1}$ is the prediction after $k-1$ rounds. These $(g_{k,i},h_{k,i})$ values are called \emph{gradients} and \emph{Hessians}.

\noindent\textbf{Split finding.} For each candidate split (feature $f$ and threshold $t$), the gain is computed as
\[
\textsf{Gain}(f,t) \;=\; 
\frac{1}{2}\cdot\left(\frac{G_{\cL}^2}{H_{\cL}+\lambda} + \frac{G_{\cR}^2}{H_{\cR}+\lambda} - \frac{(G_{\cL}+G_{\cR})^2}{H_{\cL}+H_{\cR}+\lambda}\right) - \gamma,
\]
where $G_{\cL}=\sum_{i\in \cL} g_i$, $H_{\cL}=\sum_{i\in \cL} h_i$, and similarly for the right child $\cR$. Hyperparameters $\lambda,\gamma$ control regularization and split pruning. The best split maximizes the gain.

\noindent\textbf{Leaf weights.} Once a tree is grown, each leaf $l$ receiving samples $\cI_l$ gets a weight
\[
w_l = \frac{\sum_{i\in \cI_l} g_i}{\sum_{i\in \cI_l} h_i + \lambda}.
\]

\noindent\textbf{Objective.} The ensemble minimizes the regularized objective
\begin{align*}
L_k &= \sum_{i=1}^n [g_{k,i} T_k(\vecx_i) + \tfrac{1}{2} h_{k,i} T_k(\vecx_i)^2] + \Omega(T_k), \\
\Omega(T) &= \gamma N + \frac{1}{2}\lambda \sum_{l} w_l^2,
\end{align*}
where $N$ is the number of leaves. Training iteratively adds trees until $m$ rounds are completed.

\noindent\textbf{Loss function.} In this work, we use the cross-entropy loss $\ell(p,y)= -y \log(p) - (1-y)\log(1-p)$, where $p=\sigma(z)\in(0,1)$ is the predicted probability. 
Then the gradients and hessians wrt logit are
\[
g = \frac{\partial \ell}{\partial z} =  p - y, \quad h = \frac{\partial^2 \ell}{\partial z^2} = p(1-p).
\]

\subsubsection{XGBoost Inference}
Computation of the prediction for a given data point $\vecx$ proceeds in three steps:
\begin{enumerate}
  \item \textbf{Tree Evaluation:} For each tree $T_k\in\cT$, compute the weight $w_{k} = T_k(\vecx)$.

  \item \textbf{Score Aggregation:} Compute the final score (or \emph{logit}) $z$:
  \[
    z = z_0 - \sum_{k=1}^m \eta \cdot w_{k}
  \]
  where $z_0$ is the base score determined during training and $\eta$ is the learning rate.

  \item \textbf{Probability Transformation:} Transform the score $z$ in the log-odds space via the sigmoid function $\sigma$ to yield the final probability:
  \begin{align}
    p = \sigma(z) = \frac{1}{1 + e^{-z}} \label{eq:sigmoid}
  \end{align}
\end{enumerate}

\subsubsection{XGBoost Training}
To train a tree ensemble $\cT$ on a dataset $\cD= \{(\vecx_i, y_i)\}_{i=1}^n$, XGBoost first initializes the base score $z_0$, pre-bins features into $B$ discrete bins, and then iteratively builds trees $T_k$ for boosting rounds $k=1,\ldots,m$ by selecting splits and computing leaf weights to maximize the gain. 
Note that $z_0$ is also part of the trained model, so the final model $\cM$ consists of $\cT$ and $z_0$.
We elaborate on these steps below: 

\begin{enumerate}
\item \textbf{Base Score Initialization.}
    It initializes the base score $z_0$ by taking the log-odds 
    \begin{align}
      z_0 = \sigma^{-1}(p) = \log\frac{p}{1 - p} \label{eq:log-odds}
    \end{align} 
    where $p = \frac{1}{n}\sum_i y_i$ is the mean label probability.

\item \textbf{Pre-binning.}
    Each feature column is discretized into $B$ equal-width bins, and a lookup table $\edges\in\RR^{d\times B}$ of left edges of bins is constructed.
    This \emph{pre-binning} step replaces direct floating-point comparisons on raw feature values with integer bin indices,
    enabling efficient and deterministic split evaluation.

\item For each boosting round $k=1,\ldots,m$, it builds a new tree $T_k$ with the following key steps.
\begin{enumerate}
    \item \textbf{Prediction and derivatives.}
    For every data point $i\in[n]$, the algorithm computes the current probability
    $p_i = \sigma(z_{k-1,i})$, the gradient $g_i = p_i - y_i$, and the Hessian $h_i = p_i(1 - p_i)$.

    \item \textbf{Split search.}
    For each node, containing samples indexed by $\cI\subseteq[n]$, the algorithm scans candidate pairs $(f, b)$ of feature and bin threshold defining partitions $\cL$ and $\cR$ such that $\cL \cup \cR = \cI$. 
    Then the best split $(f^*, b^*)$ is selected by maximizing the gain:
    \[
    \gain \;=\; 
    \frac{1}{2}\cdot\left(\frac{G_{\cL}^2}{H_{\cL}+\lambda} + \frac{G_{\cR}^2}{H_{\cR}+\lambda} - \frac{G_{\cI}^2}{H_{\cI}+\lambda}\right) - \gamma,
    \]
    where $G_{\cI}=\sum_{i\in \cI} g_i$, $H_{\cI}=\sum_{i\in \cI} h_i$, and similarly for the children $\cL$ and $\cR$. 
    Hyperparameters $\lambda,\gamma$ control regularization and split pruning.

    \item \textbf{Leaf weight computation.}
    Once the tree structure is finalized,
    each leaf $l\in[N]$ is assigned the weight
    \[
        w_l = \frac{ G_{\cI_l}}{H_{\cI_l} + \lambda},
    \]

    \item \textbf{Logit update.}
    Finally, for each data point $i\in[n]$, the logit is updated as
    \[
        z_{k,i} = z_{k-1,i} - \eta \cdot T_k(\vecx_i).
    \]

    \end{enumerate}
\end{enumerate}


\subsection{Details of Commit-and-Prove Zero Knowledge Proofs}\label{sec:cpzkp}

\begin{functionality}[label={func:cp}]{$\Fcp[\rel]$}
The functionality $\Fcp$ interacts with three parties: a prover $\cP$, a verifier $\cV$, and an ideal adversary $\cS$. It is parameterized by an NP relation $\rel$.    

\textbf{Committing Phase}
Upon receiving $(\cmd{Commit},\wit)$ from $\cP$:
  \begin{algorithmic}[1]
  	\State Store $\wit$ internally and send $\cmd{Commit}$ to $\cS$.
    \State Upon receiving $(\cmd{Do-Commit})$ from $\cS$, send $\cmd{Commit-Receipt}$ to $\cV$.
  \end{algorithmic}
\textbf{Proving Phase}
Upon receiving $(\cmd{Prove},\stm,\wit)$ from $\cP$:
  \begin{algorithmic}[1]
  	\State If $(\stm,\wit)\notin\rel$, ignore the input. 
    \State Else, send $(\cmd{Prove},\stm)$ to $\cS$.
    \State Upon receiving $\cmd{Do-Prove}$ from $\cS$, send $(\cmd{Prove-Receipt}, \stm)$ to $\cV$.
  \end{algorithmic}
\end{functionality}

\subsection{Zero-Knowledge Proof based on Vector Oblivious Linear Evaluation}\label{sec:vole-zk}

A series of works construct interactive ZKPs based on the vector oblivious linear evaluation (VOLE)~\cite{dittmer_et_al:LIPIcs.ITC.2021.5,SP:WYKW21,C:BMRS21,CCS:YSWW21,USENIX:WYXKW21,CCS:BBMRS21,C:BBMS22,hao2024scalable}. They require a preprocessing (witness-independent) phase that generates a uniformly sampled VOLE correlation $\vecw = \vecv + \vecu \cdot \Delta$, from which $\cP$ obtains $(\vecu,\vecw)$ and $\cV$ obtains $(\vecv,\Delta)$. $\Delta$ is denoted as the global key.
In the proving (witness-dependent) phase, $\cP$ and $\cV$ transform the VOLE correlation into a commitment to the witness $\vecx$, i.e. $M[\vecx] = K[\vecx] + \vecx \cdot \Delta$. $M[\vecx]$ is denoted as the message authentication code and $K[\vecx]$ is denoted as the local key. We denote the commitment to $\vecx$ as $[\vecx]$.

Proving the linear relations is \emph{free} over VOLE commitments since they are linearly homomorphic. I.e., the commitment to $[z] = [x] + [y]$ can be generated by defining $M[z] = M[x] + M[y]$ and $K[z] = K[x] + K[y]$. To prove the multiplicative relation $[z] = [x] \cdot [y]$, previous works~\cite{dittmer_et_al:LIPIcs.ITC.2021.5,CCS:YSWW21} observe that
\begin{align*}
    & K[x] \cdot K[y] + K[z] \cdot \Delta \\
    &=M[x] \cdot M[y] - (x \cdot M[y] + y \cdot M[x] - M[z]) \cdot \Delta \\
    &\; + (x \cdot y - z) \cdot \Delta^2
\end{align*}

Note that $(x\cdot y -z) \cdot \Delta^2$ is canceled if $\cP$ honestly commits to these values. Define $B = K[x] \cdot K[y] + K[z] \cdot \Delta$, $A_0 = M[x] \cdot M[y]$ and $A_1 = x \cdot M[y] + y \cdot M[x] - M[z]$, $\cP$ simply proves that it holds a valid pair of $(A_0,A_1)$ that satisfies $B = A_0 - A_1 \cdot \Delta$. This can be done by a mask-and-open. Furthermore, a large number of multiplicative relations can be proven in a batch by random linear combination.

For an arithmetic circuit of size $C$, the total communication overhead for the preprocessing phase is $O({\sf polylog} C)$ with 2 rounds and the online proving phase is $O(C)$ with 1 round. Notably, the end-to-end round complexity is $2$ and its online phase can be made non-interactive.
	\section{Related Work}\label{sec:related}
The field of zero-knowledge machine learning has evolved through two main phases: inference verification and training verification.


\emph{Comparison with Sparrow}
Sparrow \cite{DBLP:conf/ccs/Pappas024}, a zkPoT framework for ordinary decision trees using Gini impurity, lays important groundwork for our work. 
In particular, their histogram-based certification inspired our design of $\certxgb$, which initializes node histograms from leaf to root.
However, gradient boosting requires more complex inter-tree dependencies and split search than simple decision trees, which hinder naive application of Sparrow.
In addition, we address several technicalities that did not arise in Sparrow, including ZKP subprotocols for (1) pruning and binning, (2) range checks of training traces, making sure the prover did not cause arithmetic overflow to break soundness, (3) various fixed-point operations, such as division, gradients, hessians, sigmoid, etc., (4) handling corner cases such as division by 0. 
Moreover, we use VOLE-ZK as a backend for faster proving time, while Sparrow uses succinct non-interactive arguments of knowledge (SNARKs) to achieve sublinear verification time.

\emph{Foundational verifiable ML.} Early work established the feasibility of verifiable neural network execution. \cite{DBLP:conf/nips/GhodsiGG17} with SafetyNets introduced verifiable execution of deep neural networks using interactive proofs based on the GKR protocol (\cite{DBLP:journals/jacm/GoldwasserKR15}, representing DNNs as arithmetic circuits for verification with practical performance for networks with millions of parameters, though without zero-knowledge privacy.

\emph{Zero-knowledge inference verification.} Recent systems have made ZK inference practical for large neural networks. \cite{DBLP:journals/tdsc/LeeKKO24} with vCNN presented the first practical zero-knowledge proof system for CNN inference using zk-SNARKs (\cite{DBLP:conf/crypto/Ben-SassonCGTV13}), developing circuit-friendly representations of convolution and pooling operations with 8500$times$ speedup over naive approaches. \cite{cryptoeprint:2021/087} with ZEN proposed zero-knowledge neural network inference using Quadratic Arithmetic Program-based zk-SNARKs with optimizations for ReLU activations. Building on these foundations, \cite{DBLP:conf/eurosys/ChenWSK24} in ZKML developed a compiler framework that automatically translates TensorFlow models into Halo2 zk-SNARK circuits (\cite{cryptoeprint:2019/1021}), achieving orders-of-magnitude reductions in proof generation time for large models like ResNet-18 and GPT-2 through specialized ``gadgets'' for activation functions and circuit layout optimization.

\emph{Decision tree inference verification.} \cite{DBLP:conf/ccs/ZhangFZS20} formalized the first zero-knowledge proof protocols specifically for decision tree inference and accuracy testing, enabling model owners to prove tree predictions or accuracy without revealing structure or data. Their implementation demonstrates practical performance, proving accuracy of a 1,029-node tree on 5,000 test samples in about 250 seconds with 287 KB proof size. \cite{zhu2024zkdt} improved upon this work by replacing hash-based commitments with polynomial commitments, achieving over 50$\times$ reduction in multiplication gates through a ``modular'' approach that decouples tree structure from parameters.

\emph{Proof of training}
Training verification is fundamentally more challenging than inference due to the iterative, data-intensive nature of learning. \cite{DBLP:conf/ccs/GargGJMMPW23} coined the term ``zero-knowledge proofs of training'' and provided the first rigorous security definitions, demonstrating feasibility with a zkPoT protocol for logistic regression using MPC-in-the-head (\cite{DBLP:journals/siamcomp/IshaiKOS09}) techniques combined with zk-SNARKs. For deep neural networks, \cite{DBLP:journals/tifs/SunBLZ25} presented zkDL, introducing zkReLU (SNARK-friendly ReLU gradient proofs) and FAC4DNN (aggregated circuit design), achieving proof generation for an 8-layer DNN (10M parameters, batch size 64) in under 1 second per batch. \cite{DBLP:conf/ccs/AbbaszadehPK024} built Kaizen, a complete zkPoT system using Halo2 that proves the entire training process including weight updates while maintaining privacy of both data and model.

\cite{zkpot:adria} recently demonstrated a \emph{rejection sampling attack} on zkPoT protocols, allowing a malicious prover to choose the training randomness to bias the model without detection. 
As our underlying XGBoost training is deterministic, this attack does not apply to our setting.

\emph{Decision tree fairness verification.} 
Shamsabadi et al., \cite{DBLP:conf/iclr/ShamsabadiWFDGP23} introduced Confidential-PROFITT, which proves decision trees were trained under certain fairness constraints. Their approach differs from traditional training verification; instead of proving each training step, they designed a certified training algorithm where information gain on sensitive attributes is bounded, outputting both an optimal tree and a zero-knowledge proof certifying fair training without revealing data or model parameters.

\emph{Why fixed-point arithmetic?}
Garg et al., \cite{garg2022succ} showed that zero-knowledge proofs can certify IEEE-754 floating-point execution by explicitly modeling mantissa alignment, exponent comparison, normalization, and rounding at the bit level. However, each floating-point addition or multiplication expands into hundreds to thousands of Boolean or arithmetic constraints, due to bit decomposition, conditional logic, and normalization circuitry. As a result, the cost of proving a single floating-point operation is orders of magnitude larger than that of a fixed-point addition or multiplication. This overhead compounds in iterative learning workloads, where training involves millions of arithmetic operations across many boosting rounds. In contrast, our fixed-point formulation uses native field arithmetic with explicit truncation, yielding constant-overhead arithmetic operations and enabling scalable certification of gradient-boosted training.

	\section{Details of Fixed-Point XGBoost Training}\label{app:fixedpointxgb}

In this section, we provide further details of our fixed-point XGBoost training algorithm, including the complete training procedure (Procedure~\ref{proc:Train}) and numerical safeguards (Section~\ref{app:sub:safeguards}).
The notations and data structures used here are consistent with those in Section~\ref{sec:xgb-overview}.

Zero-knowledge proof systems operate over finite rings or fields, making
floating-point arithmetic impractical as it involves rounding, exponents, and platform-dependent nondeterminism that are difficult to represent succinctly. We therefore redesign XGBoost so that all quantities are represented in \emph{fixed-point arithmetic} with an implicit global scaling factor \textsf{scale}. This section introduces the representation, the proof-friendly approximations to the sigmoid and log-odds functions, and the implications for training correctness.

\smallskip
\noindent\textbf{Fixed-Point Representation.}
Each real value $x \in \mathbb{R}$ is represented by an integer
\[
  \tilde{x} = \lfloor x \cdot \textsf{scale} \rfloor,
\]
and all arithmetic is carried out over integers. Addition and subtraction operate directly on these integers, while multiplication and division are defined as:
\begin{align}
  \widetilde{x \cdot y} &= 
  \Big\lfloor \frac{\tilde{x} \cdot \tilde{y}}{\textsf{scale}} \Big\rfloor, 
  &
  \widetilde{\frac{x}{y}} &=
  \Big\lfloor \frac{\tilde{x} \cdot \textsf{scale}}{\tilde{y}} \Big\rfloor.
  \label{eq:fixed-muldiv}
\end{align}
The division by \textsf{scale} (or multiplication by it) is performed
\emph{over integers}, and the result is floored to remain within the integer domain. This keeps every operation deterministic, ensuring that the training relation forms a well-defined NP relation suitable for zero-knowledge proofs.
In the following, we omit the tilde notation and treat all variables as fixed-point integers with implicit scaling.

\smallskip
\noindent\textbf{Proof-Friendly Sigmoid and Log-Odds.}
In XGBoost, the score $z$ in the log-odds space for each boosting round gets converted into probability via sigmoid $\sigma(z)=1/(1+e^{-z})$, and the base score is conversely initialized by taking the log-odds $\sigma^{-1}(p)=\log(p/(1-p))$.
These are not proof-friendly due to the exponential and logarithmic operations, requiring floating-point arithmetic. 
We replace them with fixed-point variants that approximate these functions accurately in the range $[-2,2]$:

\begin{equation}
\sigmoidwidefp(z) =
\begin{cases}
0, & z \le -2,\\[4pt]
\dfrac{z + 2}{4}, & -2 < z < 2,\\[6pt]
1, & z \ge 2,
\end{cases}
\label{eq:sigmoidfp}
\end{equation}

\begin{equation}
\logitfromprobfp(p)
  = 2\left(u + \frac{u^{3}}{3} + \frac{u^{5}}{5}\right),
\quad u := 2p - 1.
\label{eq:logitfp}
\end{equation}

where all powers and divisions are evaluated according to
Eq.~\eqref{eq:fixed-muldiv}.  Equation~\eqref{eq:sigmoidfp} is linear within $[-2,2]$ and saturates outside this range, avoiding costly exponentials. Equation~\eqref{eq:logitfp} is a truncated Taylor expansion of the $\tanh^{-1}$ function ($\mathrm{atanh}(u) = u + u^{3}/3 + u^{5}/5 + \cdots$), providing a smooth and invertible log-odds mapping compatible with integer scaling.

\smallskip
\noindent\textbf{Design Implications.}
These substitutions render the entire training process discrete and
reproducible.  
All gradient $g_i=p_i-y_i$, Hessian $h_i=p_i(1-p_i)$ for each data point $(\vecx_i,y_i)$, their (partial) sums $G,G_L,G_R,H,H_L,H_R$, and gain computations—previously
involving floating-point divisions—are now expressed as scaled integer ratios:
\begin{equation}
  \gain = \frac{1}{2}\!\left(
    \frac{G_L^2}{H_L+\lambda} +
    \frac{G_R^2}{H_R+\lambda} -
    \frac{G^2}{H+\lambda}
  \right) - \gamma,
  \label{eq:gainfp}
\end{equation}
where each division follows the fixed-point rule in~Eq.~\eqref{eq:fixed-muldiv}.  Additional numerical safeguards such as feature binning, probability clipping, and bounded leaf weights are applied as in standard XGBoost but are now part of the certified relation itself rather than ad-hoc implementation details.

Empirically (see Section~\ref{sec:experiments}), this fixed-point formulation reproduces floating-point XGBoost accuracy within statistical variation (less than 1\%), while enabling proof-friendly training semantics suitable for succinct zero-knowledge proofs.

\subsection{Numerical Safeguards in Fixed-Point XGBoost}
\label{app:sub:safeguards}
Practical XGBoost implementations incorporate a number of numerical safeguards. 
Here we make them explicit and they are enforced as part of the certified training process.

\noindent\textbf{Binary label validation.}
We enforce that each training label satisfies $y_i \in \{0,1\}$ by checking $y_i \cdot (1-y_i)=0$. This prevents malformed inputs that could invalidate gradient and Hessian computations.

\noindent\textbf{Clipping of probabilities and logits.}
When computing probabilities $p=\sigma(z)$ and their inverse logits $z=\log\frac{p}{1-p}$, we apply explicit clipping. Probabilities are restricted to $[p_{\min}, 1-p_{\min}]$ (with $p_{\min}$ typically around $10^{-6}$), and logits are bounded to $[-L,L]$ where $L=\log\frac{1-p_{\min}}{p_{\min}}$. These clamps serve two roles: (i) they prevent numerical instabilities such as division by zero or $\log(0)$, which would otherwise cause divergence between the implementation and the certificate, and (ii) they align the forward and inverse mappings so that round-trips $\textsc{SigmoidFP}(\textsc{LogitFromProbFP}(p))$ remain consistent under discretization. 

\noindent\textbf{Clipping of leaf weights.}
Each leaf weight is computed as $w=-G/(H+\lambda)$, but then passed through $\textsc{ClipFP}(w,-1,1)$. This a stability safeguard which avoids extreme updates when $H$ is very small and is also part of the certified relation: the verifier explicitly checks that the prover has applied the same clamp. Without it, it wouldn't be necessary that the certificate would match the training procedure.

\noindent\textbf{Pre-binning of features.}
Features are first mapped into $B$ equal-width bins, and the training algorithm works only on these integer bin indices. This removes floating-point comparisons on raw feature values from the relation and ensures that every branch decision in the tree is discrete making it easily verifiable in zero-knowledge.

\noindent\textbf{Division safety checks.}
For every fixed-point division $z = \lfloor x / y \rfloor$, the protocol enforces auxiliary range constraints on $x$, $y$, $z$, and the remainder to prevent wrap-around modulo the field. In particular, denominators and quotients are required to lie below explicit bounds so that $x = z \cdot y + r$ holds without overflow.

\noindent\textbf{Gain and histogram bounds.}
Aggregated gradient and Hessian values used in gain computation are range-checked to ensure that squaring and division operations do not overflow the field. These bounds are sufficient to guarantee correctness of split comparisons in the zero-knowledge proof.

\noindent\textbf{Deterministic handling of terminal nodes.}
When no split yields positive gain, a deterministic terminal split is enforced. This avoids ambiguity in tree structure and ensures that training and certification follow identical control flow.

\begin{procedure}[label={proc:Train}]{$\trainfp$}
\textbf{Parameters}: trees $m$, tree height $h$, learning rate $\eta$, regularizers $\lambda,\gamma$, \#bins $B$

\textbf{Input}: Training data $D=\{\vecx_i, y_i\}_{i=1}^n$.

\textbf{Output}: Sequence of trees $T_1,\ldots,T_m$ and base logit $z_0$.

\medskip
\underline{$\trainfp(\cD)$}:
\begin{algorithmic}[1]
    \State \textbf{for} $i=1,\ldots,n$: \textbf{assert} $y_i \in \{0,1\}$ 
		\State $p \gets (\sum_i y_i)/n$
    \State $\bar{p} \gets \clipfp(p, p_{\min}, p_{\max})$ 
    \State $z_0 \gets \logitfromprobfp(\bar{p})$ \Comment{$z_0 \approx \log(\bar{p} / (1 - \bar{p}))$}
		\State Let $ z_{0,i} = z_0$ for all $i\in[n]$
    \State $(\edges, \binID) \gets \prebinfp(\vecx, B)$
		\For{$k\in[m]$}
		\For{$i\in[n]$}
		\State $p_i \gets \sigmoidwidefp(z_{k-1,i})$ \Comment{$p_i \approx \frac{1}{1 + \exp(-z_{k-1,i})}$}
		\State $g_i \gets p_i - y_i$
		\State $h_i \gets p_i (1 - p_i)$
		\EndFor
		\State $T_k \gets (\vecf_k, \vectt_k, \vecw_k) = (\veczero, \veczero, \veczero)$
		\State $\buildtree(T_k,1,\{g_i\}_{i\in[n]}, \{h_i\}_{i\in[n]},\binID,\edges)$
		\For{$i\in[n]$}
			\State Let $l_{k,i}$ be the leaf node where $\vecx_i$ falls in $T_k$
			\State $z_{k,i} = z_{k-1,i} - \eta \cdot w_{k,l_{k,i}}$
		\EndFor
		\EndFor
    \State \Return{$\{T_k\}_{k\in[m]}$ and $z_0$}
\end{algorithmic}

\medskip
\underline{$\buildtree(T,\ell,\{g_i\}_{i\in \cI}, \{h_i\}_{i \in \cI},\binID,\edges)$}:
		\begin{algorithmic}[1]
		\If{$\text{height}(T) = h$}
			\State $G \gets \sum_{i\in \cI} g_i$; $H \gets \sum_{i\in \cI} h_i$
			\State $w' \gets \frac{G}{H + \lambda}$
			\State $w \gets \clipfp(w', -1, 1)$
			\State $T.w_{\ell-(N-1)} \gets w$
		\Else
			\State $(f^*, b^*,\gain^*) \gets \findsplit(\{g_i\}_{i\in \cI}, \{h_i\}_{i\in \cI}, \binID)$
			\If{$\gain^* \leq 0$} \Comment{pruning via dummy splits; any sample will go right}
      \State $b^* \gets \bdummy=0$  
      \State $f^* \gets \fdummy$ 
      \EndIf
				\State $t^* \gets \edges[f^*][b^*]$
				\State $\cL := \{i \in \cI : \binID[i][f^*] < b^*\}$; $\cR := \cI \setminus \cL$
				\State $\buildtree(T, 2\ell,\{g_i\}_{i\in \cL}, \{h_i\}_{i\in \cL}, \binID,\edges)$
				\State $\buildtree(T, 2\ell+1,\{g_i\}_{i\in \cR}, \{h_i\}_{i\in \cR}, \binID,\edges)$
				\State $(T.f_{\ell},T.t_{\ell}) \gets (f^*, t^*)$
			\EndIf
		\end{algorithmic}

    \medskip
		\underline{$\findsplit(\{g_i\}_{i\in \cI}, \{h_i\}_{i\in \cI}, \binID)$}:
		\begin{algorithmic}[1]
			\State $f^* \gets 0$; $b^* \gets 0$; $\gain^* \gets -\infty$
			\State $G \gets \sum_{i\in \cI} g_i$; $H \gets \sum_{i\in \cI} h_i$
			\For{$f\in[d]$}
        	\For{$b \in [B]$}
          	\State $\cL \gets \{i \in \cI : \binID[i][f] < b\}$
          	\State $G_{b,\cL} \gets \sum_{i\in \cL} g_i$; $H_{b,\cL} \gets \sum_{i\in \cL} h_i$
					\State $G_{b,\cR} \gets G - G_{b,\cL}$; $H_{b,\cR} \gets H - H_{b,\cL}$
                    \State 
					\State $\gain \gets \frac{1}{2}\cdot\left(\frac{G_{b,\cL}^2}{H_{b,\cL} + \lambda} + \frac{G_{b,\cR}^2}{H_{b,\cR} + \lambda} - \frac{G^2}{H + \lambda}\right) - \gamma$
					\If{$\gain^* < \gain$}
					  \State $\gain^* \gets \gain$; $f^* \gets f$; $b^* \gets b$
					\EndIf
				\EndFor
			\EndFor
			\State \Return{the best split $(f^*, b^*, \gain^*)$}
		\end{algorithmic}
	
\end{procedure}

	\section{Details of Certification Algorithm for XGBoost Training}\label{sec:xgboost-circuit}
In this section, we provide the detailed procedure for the certification that verifies the correctness of XGBoost training. The procedure is summarized in Procedure~\ref{proc:cert-xgboost}. 
The algorithm takes as input a training data $\vecx = (\vecx_1, \ldots, \vecx_n)$ with labels $\vecy = (y_1, \ldots, y_n)$, a fitted tree ensemble $\cT = \{T_k\}_{k\in[m]}$, where each tree $T_k$ is represented by $(\vecf_k, \vectt_k, \vecw_k)$, and a base logit $z_0$. It then verifies that $(\cT, z_0)$ is the output of the fixed-point XGBoost training algorithm $\trainfp$ (see Appendix~\ref{app:fixedpointxgb}) on input $(\vecx, \vecy)$.

\begin{claim}\label{claim:cert-xgboost}
	The certificate algorithm $\certxgb$ verifies the correctness of XGBoost training.
	That is, for any $(\vecx,\vecy), \cT = \{T_k\}_{k\in[m]}, z_0$, $\certxgb(\vecx,\vecy,\cT,z_0) = 1$ if and only if $(\cT,z_0)=\trainfp(\vecx,\vecy)$.
\end{claim}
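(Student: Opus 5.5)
We prove the two directions separately. The argument rests on one observation: $\trainfp$ is a deterministic procedure. Once $(\vecx,\vecy)$ is fixed, every intermediate quantity is a function of the previous ones. This includes $p$, $z_0$, $(\edges,\binID)$, the scores $\vecz_k$, the gradients and Hessians, the node sample sets, and the splits and leaf weights. The plan is to show that $\certxgb$ recomputes or checks exactly these quantities, in a different order but with the same fixed-point operations (Eq.~\eqref{eq:fixed-muldiv}, \sigmoidwidefp, \logitfromprobfp, \clipfp).

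\textbf{Completeness (``if'').} Suppose $(\cT,z_0)=\trainfp(\vecx,\vecy)$. Then \validatelogit{} recomputes $z_0$ by the same clipped formula, so it accepts. Next, define $l_{k,i}$ as the leaf reached by $\vecx_i$ in $T_k$. By construction \validateinference{} accepts, and the recomputed $z_{k,i}$ coincide with those in $\trainfp$.

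The key lemma is about histograms. For each node $\ell$ of $T_k$, let $\cI_\ell$ be the set of samples routed to $\ell$ by \buildtree. The bottom-up histograms from \inithists{} satisfy $G_k[f][\ell][b]=\sum_{i\in\cI_\ell,\ \binID[i][f]=b} g_i$, and similarly for $H_k$. We prove this by induction on height. At the leaves, $\cI_l=\{i:l_{k,i}=l\}$ because routing in \buildtree{} via $\binID[i][f^*]<b^*$ agrees with threshold comparison against $t^*=\edges[f^*][b^*]$; this uses a pre-binning consistency lemma. At an internal node, $\cI_\ell=\cI_{2\ell}\sqcup\cI_{2\ell+1}$, so the node's histogram is the sum of its children's. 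Dummy splits send every sample right and leave an empty left child, so the lemma still holds there.

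Given the lemma, prefix sums over $b$ reproduce exactly the quantities $G_{b,\cL},H_{b,\cL}$ used in \findsplit. Hence the gains recomputed in \validatesplits{} equal those of \findsplit. The stored split is the argmax under the same tie-breaking (first maximal $(f,b)$ in scan order), or it is the dummy split exactly when $\gain^*\le 0$. The leaf weights likewise equal $\clipfp(G/(H+\lambda),-1,1)$ computed from the leaf histograms. So every check passes.

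\textbf{Soundness (``only if'').} Suppose $\certxgb$ accepts $(\cT,z_0)$. We show by induction on $k$ that $T_k$ equals the $k$-th tree $T^\star_k$ produced by $\trainfp$.

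The base case is $z_0$: it is forced by \validatelogit, and hence $\vecz_0$ agrees with the honest run. Now assume $T_1,\dots,T_{k-1}$ are correct, so $\vecz_{k-1}$, and therefore $g_i,h_i$, agree with the honest run. We then show $T_k=T^\star_k$ by a top-down induction on node depth, even though the certifier computed everything bottom-up. At the root, $\cI_1=[n]$. \validateinference{} guarantees that the $l_{k,i}$ are the leaves actually reached, so the histogram lemma (proved for an arbitrary tree, not just the honest one) gives correct histograms for every node of $T_k$ with respect to $T_k$'s own routing. At a node whose sample set $\cI_\ell$ matches the honest one, the histograms therefore match \findsplit's inputs. \validatesplits{} then forces $(f_\ell,t_\ell)$ to equal the honest choice $(f^*,t^*)$, including the dummy case and tie-breaking. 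Consequently both children's sample sets also match, which continues the induction. At the leaves, \validateleafweights{} forces $w_l$ to be the honest clipped weight.

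\textbf{Main obstacle.} The delicate step is the split check in the soundness direction. \validatesplits{} must pin down a \emph{unique} split, not merely one that achieves the maximum gain. Otherwise an accepted tree could differ from $\trainfp$'s output through a tie, or through the choice of dummy feature $\fdummy$. We will handle this in three ways:
\begin{itemize}
\item read the check as asserting equality with the first argmax in scan order, matching $\gain^*<\gain$ in \findsplit;
\item require the canonical dummy values $(\fdummy,\bdummy)$ to be fixed constants;
\item require thresholds to be taken from $\edges$, so that $t\mapsto b$ is injective.
\end{itemize}
A secondary point is that the equivalence between bin-index routing and threshold routing must hold under the fixed-point \prebinfp; we will state it as a separate lemma about equal-width bins.
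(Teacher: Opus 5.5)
Your proposal is correct and follows essentially the same argument as the paper. It inducts over the trees: correct $T_1,\dots,T_{k-1}$ fix $\vecz_{k-1}$ and hence all $g_i,h_i$. Within each tree it recurses top-down: \validateinference{} aligns the node sample sets, the bottom-up histograms then match the inputs of $\findsplit$, \validatesplits{} pins each split (dummies and tie-breaking included), and \validateleafweights{} pins the leaf weights. The paper only differs in phrasing the ``only if'' direction as a contrapositive case analysis and in treating the ``if'' direction as evident.

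Your extra care is well placed in one respect and unnecessary in another. Reading the tie-break as ``first argmax in scan order'' is genuinely needed for completeness. The componentwise condition $f^*_\ell\le f\land b^*_\ell\le b$ written in \validatesplits{} can reject an honest output when tied maximizers are incomparable, e.g.\ tied degenerate splits on different features at a pruned node. By contrast, you do not need injectivity of $t\mapsto b$, because both runs derive $t^*=\edges[f^*][b^*]$ from the same $b^*$ and the check compares threshold values directly. That is fortunate, since injectivity fails for constant features.
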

\begin{proof}
	\textbf{(If)} If $(\cT,z_0)=\trainfp(\vecx,\vecy)$, then all the assertions in the subprotocols clearly hold, and thus $\certxgb$ returns 1.
	
	\textbf{(Only if)} 
  For fixed $(\vecx,\vecy,\cT,z_0)$, suppose $(\cT=(\vecf_k,\vectt_k,\vecw_k)_{k=1}^m, z_0) \neq (\cT'=(\vecf'_k,\vectt'_k,\vecw'_k)_{k=1}^m,z'_0)=\trainfp(\vecx,\vecy)$. 
  We show that $\certxgb(\vecx,\vecy,\cT,z_0) = 0$. 

  \underline{If $z_0 \neq z'_0$:} Since $\validatelogit$ computes the logit from $\vecy$ as in $\trainfp$, $\certxgb$ derives the same $z'_0$ from $\vecy$. Thus, it holds that $z_0 \neq z'_0$, and thus $\certxgb$ returns 0.
  
  \underline{If $T_1\neq T'_1$ and $(\vecf_1,\vectt_1)\neq (\vecf'_1,\vectt'_1)$ (i.e., the first tree contains an incorrect split):} Since $\certxgb$ runs $\inithists$, which computes $p_i$, $g_i$, and $h_i$ from $z_{i,0} = z_0$ for $i=1,\ldots,n$ as in $\trainfp$, they are the same as those in $\trainfp$. 
  $\inithists$ also aggregates them into root histograms from leaf histograms. 
  Thus, every possible split and $\gain$ computed inside $\validatesplits$ for the root are the same as those in $\findsplit$ of $\trainfp$. 
  
  We first consider the case where $(f_{1,1},t_{1,1}) \neq (f_{1,1}',t_{1,1}')$, i.e., the split at the root node is incorrect.
  In $\certxgb$, it computes $(f^*_{1,1},t^*_{1,1})$ and the corresponding max $\gain$ from the root histograms, which are the same as those in $\trainfp$.
  It runs $\validatesplits$, which checks either of the following cases: (1) if $\gain^*$ is non-positive, then it checks whether $(f_{1,1},t_{1,1})$ has dummies; (2) else, it checks whether $(f_{1,1},t_{1,1}) = (f^*_{1,1}, t^*_{1,1})$. 
  In the former case, $\trainfp$ would also set dummies for $(f'_{1,1},t'_{1,1})$ and thus the assertion by $\validatesplits$ fails.
  In the latter case, since $\trainfp$ chooses the optimal split $(f^*_{1,1},t^*_{1,1})$  which is different from $(f_{1,1},t_{1,1})$, the assertion by $\validatesplits$ also fails.
  Note that $\validatesplits$ also checks that $(f^*_{1,1},b^*_{1,1})$ is the smallest among those achieving the max $\gain$, which also holds since $\trainfp$ chooses the smallest one in case of ties.

  Now, we consider the case where $(f_{1,1},t_{1,1}) = (f_{1,1}',t_{1,1}')$, i.e., the split at the root node is correct, but some other split in $T_1$ is incorrect.
  By $\validateinference$, the samples reaching left and right child nodes of the root in $\certxgb$ are the same as those in $\trainfp$.
  Thus, the histograms at these child nodes computed by $\inithists$ are also the same as those in $\trainfp$.
  Then, we can apply the same argument as above to either of the child nodes where the split is incorrect, and show that assertion by $\validatesplits$ fails in one of the second-level nodes. 
  Applying this argument recursively, $\certxgb$ always detects the incorrect split in $T_1$.

  \underline{If $T_1\neq T'_1$ and $(\vecf_1,\vectt_1) = (\vecf'_1,\vectt'_1)$ but $\vecw_1 \neq \vecw'_1$ (i.e., the first tree contains an incorrect leaf weight):} Since the splits are correct, by $\validateinference$, the samples reaching each leaf in $\certxgb$ are the same as those in $\trainfp$.
  Thus, the histograms at these leaves computed by $\inithists$ are also the same as those in $\trainfp$.
  Then, $\validateleafweights$ checks whether the leaf weights are correctly computed from these histograms as in $\trainfp$, and thus the assertion fails for the incorrect leaf weight.

  \underline{$T_1 = T'_1$ but for some $k\leq m$, $T_k \neq T'_k$ (i.e., one of the subsequent trees is incorrect):} 
  Since $\certxgb$ correctly computes the updated logits $\vecz_1$ after training $T_1$, which are the same as those in $\trainfp$, we can apply the same arguments as above iteratively to $T_k$ for $k=2,\ldots,m$ and show that $\certxgb$ detects the incorrect tree.

	
	Therefore, by induction, for each $k\in[m]$, $T_k$ is correctly trained on $(\vecx,\vecy)$ with the initial scores $\vecz_{k-1}$, and the updated scores $\vecz_k$ are correctly computed. This implies that $(\cT,z_0)=\trainfp(\vecx,\vecy)$.
\end{proof}

\begin{procedure}[label={proc:cert-xgboost}]{$\certxgb$}
\textbf{Parameters}: Number of points $n$, features $d$, trees $m$, bins $B$, leaves $N=2^h$ at depth $h$, learning rate $\eta$, regularizers $\lambda,\gamma$.

\textbf{Input}: Fixed-point training data $\cD = (\vecx, \vecy)$, the fitted tree ensemble $\cT = \{T_k\}_{k\in[m]}$, and base logit $z_0$.

\textbf{Output}: 1 (accept) or 0 (reject).
 \begin{algorithmic}[1]
  \State \textbf{for} $i=1,\ldots,n$: \textbf{assert} $y_i \in \{0,1\}$
	\State $(\edges,\binID)\gets\prebinfp(\vecx,B)$
	\State \textbf{assert} $\validatelogit(\vecy,z_0)=1$
	\State $\vecz_0 \gets (z_{0,i})_{i=1}^n$ with $z_{0,i} = z_0$
    \For{each tree $k\in[m]$} \Comment{Initialize intermediate scores}
    	\For{each data point $i\in[n]$}
		\State compute the reached leaf $l_{k,i}\in[N]$ by evaluating $T_k$ on $\vecx_i$
		\State $z_{k,i} \gets z_{k-1,i}-\eta\cdot w_{k,l_{k,i}}$
		\EndFor
	\State $\vecz_k \gets (z_{k,i})_{i=1}^n$
	\State $\vecl_k \gets (l_{k,i})_{i=1}^n$
    \EndFor
	\For{each tree $k\in[m]$} \Comment{Tree validation}
	\State \textbf{parse} $(\vecf_k,\vectt_k,\vecw_k)\gets T_k$
	\State \textbf{assert} $\validateinference(\vecx,\vecf_k,\vectt_k,\vecl_k)=1$
	\State $(G_k,H_k)\gets\inithists(\vecz_{k-1},\vecy,\binID,\vecl_k)$
  \For{each internal node $\ell\in[N-1]$}
  \State compute $(f^*_{k,\ell},b^*_{k,\ell})$ such that $\gain$ derived from $(G_k,H_k)$ is maximized
  \State $t^*_{k,\ell}\gets\edges[f^*_{k,\ell}][b^*_{k,\ell}]$
  \EndFor
  \State $(\vecf^*_k,\vectt^*_k) \gets ((f^*_{k,\ell}, t^*_{k,\ell}))_{\ell=1}^{N-1}$
  \State \textbf{assert} $\validateleafweights(G_k,H_k,\vecw_k)=1$
  \State \textbf{assert} $\validatesplits(G_k,H_k,\vecf_k,\vectt_k,\vecf^*_k,\vectt^*_k,\edges)=1$
	\EndFor
	\State \Return{1 (accept)}
 \end{algorithmic}
\end{procedure}

\begin{procedure}[label={prot:prebin-fp}]{$\prebinfp$}
\textbf{Input:} Fixed-point feature matrix $\vecx=(x_{i,j})_{i\in[n],\,j\in[d]}$, number of bins $B$\\
\textbf{Output:} $\edges\in \NN^{d} \times \RR^{B+1}$, and $\binID\in[B]^{n\times d}$
\begin{algorithmic}[1]
\For{$f=1$ \textbf{to} $d$}
  \State $c_{min} \gets \min_{i\in[n]} x_{i,f}$;\quad $c_{max} \gets \max_{i\in[n]} x_{i,f}$
    \State $\delta_f \gets  \dfrac{c_{max}-c_{min}}{B}$
    \State Define a lookup table $\edges[f] \gets (c_{min} + (b-1)\cdot \delta_f)_{b=1}^B$ consisting of left edges of $B$ equal-width bins.
    \State $\edges[\fdummy][\bdummy] \gets \tdummy$ \Comment{define dummy (small enough) edge for pruning, e.g., $\tdummy = \mathtt{DBL\_MIN}$}
    \State \textbf{for } $i=1,\dots,n$:
      \[
        \binID[i][f] \gets \sum_{b=1}^B \bm{1}\{\edges[f][b] \le x_{i,f} \}
      \]
\EndFor
\State \Return $(\edges,\binID)$
\end{algorithmic}
\end{procedure}

\begin{procedure}[label={prot:validate-logit}]{$\validatelogit$}
\textbf{Parameters:} $p_{\min}, p_{\max}$ for clipping probabilities\\
\textbf{Input:} Labels $\vecy=(y_i)_{i\in[n]}$ and base logit $z_0$  \\
\textbf{Output:} Accept/Reject
\begin{algorithmic}[1]
\State $p \gets \frac{1}{n}\sum_{i=1}^n y_i$
\State $p' \gets \clipfp(p, p_{\min}, p_{\max})$
\State $u\gets 2p'-1$ 
\State $z'_0 \gets 2\cdot\Big(u + \frac{u^3}{3} + \frac{u^5}{5}\Big)$ \Comment{truncated $\mathrm{atanh}$ series with fixed-point ops}
\State \textbf{assert} $z_0 = z'_0$ 
\end{algorithmic}
\end{procedure}

\begin{procedure}[label={prot:validate-inference}]{$\validateinference$}
\textbf{Input:} Data points $\vecx = (x_{i,f})_{i\in[n],f\in[d]}$, node features $\vecf$, node thresholds $\vectt$, and leaf indices $\vecl = (l_{i})_{i=1}^n$ reached by each $\vecx_i$\\
\textbf{Output:} Accept/Reject

\begin{algorithmic}[1]
\State Define a table of root-to-leaf paths $\treepath[l] = (f_{l,j}, t_{l,j})_{j=1}^{h}$ for each $l\in[N]$
\For{$i=1$ \textbf{to} $n$}
  \State \emph{(Path lookup)} Retrieve $\treepath[l_i] = (f_{l_i,j}, t_{l_i,j})_{j=1}^{h}$.
  \State \emph{(Batched permutation)} Define a permutation $(\bar{x}_{i,1},\dots,\bar{x}_{i,d})$ of $(x_{i,1},\dots,x_{i,d})$ such that the first $h$ entries align with the path features $(f_{l_i,1},\dots,f_{l_i,h})$; \textbf{assert} permutation correctness.
  \State \emph{(Branch decisions)} For $j=1,\dots,h$, set $a_j \gets \bm{1}\{\,t_{l_i ,j} \leq \bar{x}_{i,j}\,\}$ and \textbf{assert}:
  \[
    l_{i} + 2^h - 1 \;=\; 2^h + \sum_{j=1}^{h} a_j\cdot 2^{h-j}.
  \]
\EndFor
\end{algorithmic}
\end{procedure}

\begin{procedure}[label={prot:init-hists}]{$\inithists$}
\textbf{Input:} Scores $\vecz = (z_i)_{i=1}^n$, labels $\vecy=(y_i)_{i=1}^n$, $\binID$, leaf indices $\vecl=(l_i)_{i=1}^n$\\
\textbf{Output:} Histograms $(G,H)$
\begin{algorithmic}[1]
\State \textbf{for} $f\in[d],\,\ell\in[2N-1],\,b\in[B]$, initialize $G[f][\ell][b]\gets 0$, $H[f][\ell][b]\gets 0$.
\For{$i=1$ \textbf{to} $n$}
  \State $p_i \gets \sigmoidwidefp(z_i)$;\quad $g_i \gets p_i - y_i$;\quad $h_i \gets p_i\cdot(1-p_i)$
  \For{$f=1$ \textbf{to} $d$}
    \State $b\gets \binID[i][f]$;\; $\ell\gets l_i +(N-1)$
    \State $G[f][\ell][b] \gets G[f][\ell][b] + g_i$;\quad $H[f][\ell][b] \gets H[f][\ell][b] + h_i$
  \EndFor
\EndFor
\For{$\ell=N-1$ \textbf{down to} $1$} \Comment{propagate up to root}
  \For{$f=1$ \textbf{to} $d$}
    \For{$b=1$ \textbf{to} $B$}
      \State $G[f][\ell][b] \gets G[f][2\ell][b] + G[f][2\ell+1][b]$
      \State $H[f][\ell][b] \gets H[f][2\ell][b] + H[f][2\ell+1][b]$
    \EndFor
  \EndFor
\EndFor
\State \Return $(G,H)$
\end{algorithmic}
\end{procedure}

\begin{procedure}[label={prot:sigmoid-wide-fp}]{\sigmoidwidefp}
\textbf{Input:} Fixed-point logit\\
\textbf{Output:} $y \in \{0,\dots,\scale\}$
\begin{algorithmic}[1]
\If{$z \le -2$}
  \State \Return $0$
\ElsIf{$z \ge 2$}
  \State \Return $1$
\Else
  \State $y \gets \dfrac{z + 2}{4}$ \Comment{integer arithmetic only}
  \State \Return $y$
\EndIf
\end{algorithmic}
\end{procedure}

\begin{procedure}[label={prot:validate-leaf-weights}]{$\validateleafweights$}
\textbf{Input:} Leaf histograms $(G,H)$, weights $\vecw=(w_l)_{l=1}^N$\\
\textbf{Output:} Accept/Reject
\begin{algorithmic}[1]
\For{$l=1$ \textbf{to} $N$}
  \State $\ell \gets l + (N-1)$
  \State $G^{\mathrm{leaf}} \gets \sum_{b=1}^B G[1][\ell][b]$;\quad $H^{\mathrm{leaf}} \gets \sum_{b=1}^B H[1][\ell][b]$ \Comment{use any feature $f$, as the sum is feature-independent}
  \State $w'_l \gets \dfrac{G^{\mathrm{leaf}}}{H^{\mathrm{leaf}}+\lambda}$
  \State \textbf{assert} $w_{l} = \clipfp(w'_l,-1,1)$
\EndFor
\end{algorithmic}
\end{procedure}

\begin{procedure}[label={prot:validate-splits}]{$\validatesplits$}
\textbf{Input:} Node histograms $(G,H)$, resulting splits $(\vecf,\vectt)=(f_\ell,t_\ell)_{\ell=1}^{N-1}$, indices $(\vecf^*,\vectt^*)=(f^*_\ell,t^*_\ell)_{\ell=1}^{N-1}$ leading to maximum gain,  and $\edges$. Let $e_0 = 0$. \\
\textbf{Output:} Accept/Reject
\begin{algorithmic}[1]
\For{$\ell=1$ \textbf{to} $N-1$}
  \For{$f=1$ \textbf{to} $d$}
    \State $G\gets\sum_{b=1}^B G[f][\ell][b]$;\quad $H\gets\sum_{b=1}^B H[f][\ell][b]$
    \State $G_L\gets 0,\,H_L\gets 0$
    \For{$b=1$ \textbf{to} $B$}
      \State $G_L\gets G_L + G[f][\ell][b]$;\quad $H_L\gets H_L + H[f][\ell][b]$
      \State $G_R\gets G - G_L$;\quad $H_R\gets H - H_L$
      \State $\gain[f][b] \gets \frac{1}{2}\cdot\left(\frac{G_L^2}{H_L+\lambda} + \frac{G_R^2}{H_R+\lambda} - \frac{G^2}{H+\lambda}\right)-\gamma$ \Comment{fixed-point divisions}
    \EndFor
  \EndFor
  \State Retrieve $b^*_\ell$ s.t.\ $t^*_\ell = \edges[f^*_\ell][b^*_\ell]$
  \State Retrieve $\gain^* \gets \gain[f^*_\ell][b^*_\ell]$
  \State \textbf{assert} $(\gain^* > \gain[f][b]) \lor (\gain^* = \gain[f][b] \land f^*_\ell \leq f \land b^*_\ell \leq b)$ for all $f\in[d],b\in[B]$ \Comment{tie-breaker check} 
  \State $e_{\ell} \gets (\gain^* \leq 0 \lor e_{parent})$ \Comment{check if the current node should be terminal, or the upper-level node is terminal}
  \State \textbf{assert} $(1-e_{\ell})\cdot (f_\ell = f^*_\ell  \land t_\ell = t^*_\ell) \lor e_\ell \cdot (f_\ell = \fdummy \land  t_\ell = \tdummy) = 1$ \Comment{if not dummy, use max gain split; else, use terminal split}
\EndFor
\end{algorithmic}
\end{procedure}

\begin{procedure}[label={prot:ClipFP}]{$\clipfp$}
\textbf{Input:} $x$, lower bound $a$, upper bound $b$ with $a \le b$ represented as fixed-point numbers\\
\textbf{Output: }Clipped value $y$.
\begin{algorithmic}[1]
  \State $y \gets x$
  \If{$y < a$} \State $y \gets a$ \EndIf
  \If{$y > b$} \State $y \gets b$ \EndIf
  \State \Return $y$
\end{algorithmic}
\end{procedure}

    \section{Additional Details on ZK XGBoost-FP}
\label{sec:app:zkxgboostfp}

In this section, we provide detailed explanation on the proof of comparison, division, and truncation, as well as additional components in proving XGBoost-FP shown in Figure~\ref{proc:cert-xgboost}.

\subsection{Proof of Comparison}
\label{sec:app:comparison}
Our proof of comparison builds upon the idea of~\cite{hao2024scalable} to prove the comparison relation by bits-decomposition. It first converts the proof of comparison into a MSB proof $\bm{1}\{\auth{x} < \auth{y}\} = {\sf MSB}(\auth{x - y})$ due to the signed representation. This is true because ${\sf MSB}(\auth{x - y})=1$ if $x-y<0$ and ${\sf MSB}(\auth{x - y})=0$ otherwise. We shift our focus into proving $\auth{s} = {\sf MSB}(\auth{z})$ for committed $(\auth{z},\auth{s})$. To ensure the soundness of MSB proof, it requires a \emph{one-bit gap} between the scale of encoded signed number and the underlying field: for a finite field $\FF_p$, we require $x,y \in [0,\lfloor p/4\rfloor) \cup [p - \lfloor p/4 \rfloor,p-1]$. Otherwise, the subtraction may cause an overflow and lead to an incorrect result. 
Existing approaches all somehow require the $\cP$ providing the rest bits of $z$ and proving the correctness of bit-decomposition.

\noindent\textbf{Bit-decomposition.}
Assume that a $\cP$ commits to $(\auth{z},\auth{s})$ and needs to prove that $\auth{s} = {\sf MSB}(\auth{z})$. Define $n,\FF_p$ s.t. $\log|\FF_p| \leq n$, a naive way to prove the sign is to apply a bit-decomposition as in~\cite{USENIX:WYXKW21}. However, it incurs $O(n)$ communication cost to commit to all $n$ bits $(z_0,\dots,z_{n-1})$ and need to prove an additional binary adder circuit. 
\cite{hao2024scalable}'s approach only incurs $O(n/d)$ for an arbitrarily defined $d$ by leveraging the proof of table lookup~\cite{CCS:FKLOWW21,yang2024two,DBLP:journals/iacr/Habock22a}. 

Without loss of generality, we assume $(d,t)$ such that $n - 1 = d \cdot t$. The idea is to group every consecutive $d$ bits of $z$ into $(\tilde{z}_0,\dots,\tilde{z}_{t-1})$, in which $\tilde{z}_i = \sum_{j=0}^t 2^j \cdot z_{id+j}$. In this case, $\cP$ only needs to commit to $t = (n-1) / d + 1$ elements and prove that $z = 2^{n-1} \cdot s + \sum_{i=0}^{t-1} 2^{id} \cdot \tilde{z}_i$. Additionally, $\cP$ needs to show that each $\tilde{z}_i \in [0,2^d-1]$ via a membership proof from a list $T = (0,1,\dots,2^d-1)$, which can be instantiated by table lookup with only $O(1)$ cost per access~\cite{yang2024two}.

\noindent\textbf{Preventing Malicious Prover.} A soundness issue with the above scheme is that a cheating $\cP$ may commit to incorrect $(\tilde{z}_0,\dots,\tilde{z}_{t-1}, s)$ such that
$z + p = 2^{n-1} \cdot s + \sum_{i=0}^{t-1} 2^{id} \cdot \tilde{z}_i$. 
For example, assume that $p = 2^{61}-1$ and $z = 0$, a cheating $\cP$ can commit to $s = 1$ and all $\tilde{z}_i = 2^d-1$, and prove that $z = p \; {\sf mod} \; p = 0$.
Since ${\sf mod} \; p$ happens implicitly in ZK operations, extra measures are needed to detect such wrap-around.~\cite{hao2024scalable} attempted to address this issue but its approach doubles the cost of secure comparison.

Our solution employs a Mersenne prime in the form of $p = 2^n - 1$, and rules out the only cheating case when $z=0$, $s=1$ and $\tilde{z}_i = 2^d-1$ for all $i \in [0,t)$. 
By defining $\auth{w} = \bm{1}\{\auth{z} \neq 0\}$, we observe that 
$
{\sf MSB}(z) = \begin{cases}
   0 &\text{if } z = 0, w = 0 \\
   s &\text{if } z \neq 0, w = 1
\end{cases}
$.
It implies that $(1 - w) \cdot s = 0$ should always be true. Additionally, we employ the non-equality-zero check from~\cite{SP:PHGR13} to prove $\auth{w} = \bm{1}\{\auth{z} \neq 0\}$. Our solution to address the soundness issue only requires committing 2 extra values and proving 3 multiplicative relations, which incurs less overhead than~\cite{hao2024scalable}. We defer the formal algorithm description to Figure~\ref{prot:zkp-nonlinear} in Appendix~\ref{sec:app:zkxgboostfp}.


\noindent\textbf{Cost Analysis.} The proof of comparison requires committing to $3+n/d$ witnesses, performing $n/d$ lookup on a table of size $2^d$, and proving $3$ multiplicative relations. The proof for table lookup generally takes $O(1)$ per access. Its $O(2^d)$ setup cost will be amortized since the table is reused across the whole proof~\cite{yang2024two,DBLP:journals/iacr/Habock22a}.

\begin{protocol}[label={prot:zkp-nonlinear}]{ZKP Gadgets}

\noindent\textbf{Check Nonzero.} On input $\auth{x}, \auth{y}$, prove that $\auth{y} = \bm{1}\{\auth{x} \neq 0\}$.
\begin{enumerate}
    \item If $x = 0$, $\cP$ commits to $\auth{u}$ such that $u = 0$. Otherwise, it commits to $u = x^{-1}$.
    \item Prove that $\auth{y} - \auth{u} \cdot \auth{x} = 0$ and $(1 - \auth{y}) \cdot \auth{x} = 0$.
\end{enumerate}

\medskip
\noindent\textbf{Comparison.} On input $\auth{x}, \auth{y}, \auth{s}$, prove that $s = \bm{1}\{\auth{x} < \auth{y}\}$. Define parameters $(d,t)$ such that $n - 1 = d \cdot t$. Construct a public lookup table $\mathcal{T} = (0,\dots,2^d-1)$.
\begin{enumerate}
    \item Commits to $\auth{z} = \auth{x} - \auth{y}$ and $\auth{s} = {\sf MSB}(\auth{z})$. Decompose $z$ into $(\auth{\tilde{z}_0},\dots,\auth{\tilde{z}_{t-1}})$ such that $\tilde{z}_i = \sum_{j=0}^t 2^j \cdot z_{id+j}$.
    \item Prove that for $i \in[0,t)$, $\tilde{z}_i = \mathcal{T}[\tilde{z}_i]$. 
    \item Prove that $\auth{z} = 2^{n-1} \cdot \auth{s} + \sum_{i=0}^{t-1} 2^{id} \cdot \auth{\tilde{z}_i}$.
    \item Commit to $\auth{w} = \bm{1}\{\auth{z} \neq 0\}$ and prove its correctness using the above nonzero check. Prove that $(1 - \auth{w}) \cdot \auth{s} = 0$.
\end{enumerate}

\medskip
\noindent\textbf{Division.} On input $\auth{x}, \auth{y}, \auth{z}$, prove that $z = \lfloor x / y \rfloor$. Define $m_{q}$ and $m_d$ as the upper bound of the abstract value of the quotient and denominator.
\begin{enumerate}
    \item $\cP$ commits to abstract values $\auth{\bar{x}}, \auth{\bar{y}}, \auth{\bar{z}}$ and prove that $\bar{i} = (1 - 2 \cdot {\sf MSB}(i)) \cdot i$ for $i \in \{x,y,z\}$.
    \item $\cP$ commits to the residual $\auth{\bar{r}}$ and prove that $\auth{\bar{x}} = \auth{\bar{y}} \cdot \auth{\bar{z}} + \auth{\bar{r}}$.
    \item Prove that $0 \leq \auth{\bar{r}} < \auth{\bar{y}}$, $\auth{\bar{y}} \leq m_d$ and $\auth{\bar{z}} < m_q$.
    \item Prove that ${\sf MSB}(z) = {\sf MSB}(x) \oplus {\sf MSB}(y)$ using the fact that for any binary values $\alpha,\beta$, $\alpha \oplus \beta = \alpha + \beta - 2 \alpha \cdot \beta$.
\end{enumerate}

\medskip
\noindent\textbf{Truncation.} On input $\auth{x}, \auth{z}, f$, prove that $z = \lfloor x / 2^f \rfloor$. The proof is the same as the division except that $y=2^f$ is public so that its range check is avoided. Additionally, $z$ is bounded by $m_q = \lfloor p/2^{f+1} \rfloor$. If $p$ is a Mersenne prime, the check of $0 \leq \auth{r} < 2^f$ and $z < m^q$ can both be done by bit-decomposition proofs.

\end{protocol}

\subsection{Proof of Division and Truncation}

\textbf{Division.} Proof of division takes inputs $(\auth{x}, \auth{y}, \auth{z})$ and proves that $z = \lfloor x / y \rfloor$. Note that this statement disallows directly proving it by $x = z \cdot y$ because of the floor operation. To simplify the problem, we assume that both $(x,y)$ represent positive values. To generalize it to arbitrary inputs, we can first prove their absolute values and then determine the sign of $z$ based on the signs of $x,y$. It can also be trivially generalized for fixed-point representations by left-shift $x$ before the division (assume that the left shift does not overflow).

In our XGBoost-FP, the numerators and denominators may range from $0$ to nearly $p/2^f$. Previous proofs of division are not suitable since they only work for bounded small values~\cite{CCS:LiuXieZha21,DBLP:conf/ccs/Pappas024}. Another approach proves the floating-point division but requires thousands of constraints~\cite{weng2021mystique}.

To prove the relation, our approach leverages the existence of a residue $r \in [0,y)$ such that $x = z\cdot y + r$. Additionally, two range checks are required to maintain the soundness: $r \in [0,y)$, and $z \in [0,\lfloor p/y \rfloor]$. The first check ensures $r$ is a residue. The second check is needed to prevent the similar wrap-around issue happened in the proof of MSB. Namely, there could be many possible pairs of $(z', r')$ such that $x = z' \cdot y + r' \; {\sf mod} \; p$ is $z'$. The proof should ensure a $z$ that satisfies $x = z \cdot y + r$ without modulo $p$.

In our work, we define upper bounds for the positive denominator $y$ and quotient $z$, denoted as $m_d$ and $m_q$, such that $m_d \cdot m_q < p/2$. $\cP$ proves that $\auth{y} < m_d$, $\auth{r} < \auth{y}$ and $\auth{z} < m_q$. This is not necessarily sufficient for general proof of division, but is sufficient for our XGBoost-FP since with a relatively large $p$, these values are expected to lie within a reasonable range.

\noindent\textbf{Truncation.} The truncation is a special type of division with public and positive denominators $y = 2^f$. It follows similar ideas as in the above proof of division to prove the range of $r$ and $z$. However, its proof is much simpler because $f$ is usually small so that the range check of $r \in [0,2^f)$ can be implemented by a table lookup. Also, the Mersenne prime $p = 2^n-1$ results in $\lfloor (p-1) / 2^f \rfloor = 2^{n-f} - 1$, which also enables a range check of $z \in [0, 2^{n-f} - 1]$ to be resolved by our bit-decomposition proof.

We defer the formal algorithm description for division and truncation to Figure~\ref{prot:zkp-nonlinear} in Appendix~\ref{sec:app:zkxgboostfp}.

\noindent\textbf{Cost Analysis.} The division proof invokes $7$ comparison/MSB and proves $4$ multiplicative relation. The truncation is simplified to $2$ MSB proof, 1 bit-decomposition proof, 1 table lookup, and 2 multiplications. The concrete overhead depends on the underlying proof system and the construction of lookup table.

\subsection{Additional Components}
\label{sec:app:zkp:additional}
\noindent\textbf{Input validation.} Our model training requires the global truth $\bm{y}$ to be binary values. Hence, a check is performed on witnesses to ensure that $y_i \cdot (1 - y_i) = 0$ for $i \in [n]$. Our protocol also supports arbitrary input validation on input dataset $\bm{x}$ depending on the dataset type.

\noindent\textbf{Global Range Checks.} In principle, the proof of numerical computation loses soundness whenever the overflow occurs. This implies that a range check is needed for every intermediate values that are committed during the proof. This can be realized by the proof of comparison mentioned above. To reduce the overhead for range checks, we examine the XGBoost proof and find out that most of intermediate values are already bounded because of the checks that we applied in the proof of comparison, division, and truncation. Additional range checks are only required in a few places.

Essentially, the range check is required when validating the leaf weights and the splits of internal nodes. The former computes $w \leftarrow G / (H + \lambda)$. To respect the fixed-point representation, the proof in fact validates $(G \cdot {\sf scale}) / (H + \lambda)$ so that the result $w$ is lifted by ${\sf scale}$. Hence, the value $|G|$ should be restricted in $[0, p / (2\cdot {\sf scale}))$. Similarly, we compute $G^2 / (H + \lambda)$ when validating the split. Since $G^2$ is already lifted by ${\sf scale}^2$, it can be directly fed into the proof of division without a truncation. However, we still need to validate that $G \in [-\sqrt{p/2}, \sqrt{p/2})$ to prevent the overflow when proving $G^2$.

In some situations, the checks can be avoided if the input to specific operations are already bounded. When validating base logit $z_0'$, the approximation of atanh is guaranteed not to overflow since $p \leftarrow \sum_{i=1}^n y_i / n < 1$, due to $y_i$ being binary values. In validating the hessian $h_i \leftarrow p_i \cdot (1 - p_i)$, the overflow is also not needed because $p_i$ is an output of the sigmoid function.


\noindent\textbf{Public Commitment and Proof of Opening.} Our framework also supports the commit-and-prove ZKP such that $\cP$ can first publicly commit to the dataset $\mathcal{D} = (\bm{x}, \bm{y})$, model parameters $T_k$, and other auxiliary information, and later prove their consistency in ZKP. The public commitment can be transmitted to a public bulletin board or corresponding verifiers anytime before ZKP. This can be realized by~\cite{USENIX:WYXKW21,sun2025committed,campanelli2019legosnark}.

\subsection{Security Analysis}
\label{sec:app:zkp:security}

We provide the security analysis for Theorem 1. We assume that $\Func{ZK}$ is instantiated by~\cite{yang2024two,DBLP:journals/iacr/Habock22a} and $\Func{CVOLE}$ is instantiated by~\cite{sun2025committed}. The secure realization of these functionalities can be referred to the cited works. We show Functionalities $\Func{CVOLE}$ and $\Func{ZK}$ below and reinstate Theorem 1.

\begin{functionality}[label={func:cvole}]{$\Func{CVOLE}$}
The functionality interacts with three parties: a prover $\cP$, a verifier $\cV$, and an ideal adversary $\cS$. It is parameterized by a secret $\bm{x}$ and its commitment ${\sf com}_{x}$.

\begin{enumerate}
    \item Upon receiving ${\sf init}$ from both parties, and additional input $\bm{x}$ and decommitment message ${\sf decom}_{x}$ from $\cP$, check whether ${\sf com}_{x}$ correctly decommits to $\bm{x}$.
    \item Upon receiving ${\sf cvole}$ from both parties, sample a VOLE correlation $\bm{m} = \bm{k} + \bm{x} \cdot \Delta$. Send $(\bm{m})$ to $\cP$ and $(\bm{k},\Delta)$ to $\cV$. If $\cP$ is corrupted, receive $(\bm{m})$ from $\cS$ and recompute $\bm{k} = \bm{m} - \bm{x} \cdot \Delta$. If $\cV$ is corrupted, receive $(\bm{k},\Delta)$ from $\cV$ and recompute $\bm{m} = \bm{k} + \bm{x} \cdot \Delta$. Send these values to corresponding parties.
\end{enumerate}

\end{functionality}

\begin{functionality}[label={func:zk}]{$\Func{ZK}$}
The functionality interacts with three parties: a prover $\cP$, a verifier $\cV$, and an ideal adversary $\cS$. It takes input a circuit $C$ that consists of arithmetic components and read-only or read-write RAM components. Upon receiving $({\sf prove},C,\bm{x},\bm{w})$ from $\cP$ and $({\sf verify},C,\bm{x})$, it verifies $C(\bm{x},\bm{w})$ by traversing the circuit by topological order and check the relations.

\begin{enumerate}
    \item For arithmetic components, it verifies arithmetic relations. If any relation is not satisfied by $\bm{x},\bm{w}$, it aborts.
    \item For RAM components, it initializes the memory using data from $\bm{x}$ or $\bm{w}$. It checks the data output by ${\sf read}$ accesses, and modifies the table according to ${\sf write}$ accesses. If any access operation is inconsistent, it aborts.
\end{enumerate}

\end{functionality}

\noindent\textbf{Theorem 1.} \textit{Define the relation by the proof of fixed-point XGBoost Training described in $\relxgb$~\eqref{eq:rel},  Protocol \ref{prot:zkp-training-short} securely realizes $\Fcp$ in the $(\Func{CVOLE},\Func{ZK})$-hybrid model.}

\textbf{Completeness.} In the case of a pair of honest $\cP$ and $\cP$, the completeness is obvious since $\cP$ honestly trains the model from the committed dataset and commits to the extended witness. Assume that the prover trains the model and proves ZK XGBoost both using fixed-point computation, the proof passes with probability $1$.

\textbf{Soundness.} Our protocol makes black-box access to the underlying $\Func{ZK}$ and $\Func{CVOLE}$. Hence, a corrupted $\cP$'s view can be simulated by constructing a simulator $\simulator$ who emulates the functionalities $\Func{ZK}$ and $\Func{CVOLE}$. 
In $\Func{CVOLE}$, it receives $\cD = (\vecx,\vecy)$, $\cT=\{T_k\}_{k\in[m]}$ with $T_k=(\vecf_k,\vectt_k,\vecw_k)$, $z$, and their commitments from $\cP$ and relay them to $\Fcp$. 
It also check whether the commitments successfully decommits to these values. If $\Func{CVOLE}$ aborts, it sends ${\sf abort}$ to $\Fcp$ and aborts itself.
It receives IT-MACs from corrupted $\cP$ and samples a correct VOLE correlation for committed data.
It emulates $\Func{ZK}$ with $\cP$'s input and the XGBoost-FP relation, and $\Func{ZK}$ checks whether the XGBoost-FP relation is satisfied. If not, it sends ${\sf abort}$ to $\Fcp$ and aborts itself. Otherwise, it sends ${\sf PROVE-RECEIPT}$.

\textbf{Zero-Knowledge.} $\cV$ has no input to $\Fcp$ so the emulation of $\cV$'s view is relatively simple. We construct a simulator $\simulator$, who emulates $\Func{ZK}$ and $\Func{CVOLE}$, and interact with corrupted $\cV$ accordingly. It relays whatever it receives from $\Fcp$ to corrupted $\cV$ and aborts when it aborts.
	\section{Details of Certificate for Random Forest Training}\label{sec:forest-circuit}
In this section, we provide the detailed procedure for the certificate that verifies the correctness of random forest training. The procedure is summarized in Procedure~\ref{proc:cert-forest}. 
The algorithm takes as input a training data $\vecx = (\vecx_1, \ldots, \vecx_n)$ with labels $\vecy = (y_1, \ldots, y_n)$, a fitted forest $\cT = \{T_k\}_{k\in[m]}$, where each tree $T_k$ is represented by $(\vecf_k, \vectt_k, \vecw_k)$.

\begin{procedure}[label={proc:cert-forest}]{$\certforest$}
\textbf{Parameters}: Number of points $n$, features $d$, trees $m$, bins $B$, leaves $N=2^h$ at depth $h$.

\textbf{Input}: Fixed-point training data $(\vecx, \vecy)$, the fitted forest $\cT = \{T_k\}_{k\in[m]}$, index sets $\{I_k\}_{k\in[m]}$, where $I_k \subset [n]$.

\textbf{Output}: 1 (accept) or 0 (reject).
 \begin{algorithmic}[1]
	\State $(\edges,\binID)\gets\prebinfp(\vecx,B)$
	\For{each tree $k\in[m]$}
	\State \textbf{parse} $(\vecf_k,\vectt_k,\vecw_k)\gets T_k$
		\For{each data point $i\in I_k$}
		\State compute the reached leaf $l_{k,i}\in[N]$ by evaluating $T_k$ on $\vecx_i$
		\EndFor
	\State $\vecl_k \gets (l_{k,i})_{i\in I_k}$
	\State \textbf{assert} $\validateinference((\vecx_i)_{i\in I_k},\vecf_k,\vectt_k,\vecl_k)=1$
	\State $H_k\gets\inithistslabel((y_i)_{i\in I_k},\binID,\vecl_k)$
	\State \textbf{assert} $\validateleafweightslabel(H_k,\vecw_k)=1$
	\State \textbf{assert} $\validatesplitsgini(H_k,\vecf_k,\vectt_k,\edges)=1$
	\EndFor
	\State \Return{1 (accept)}
 \end{algorithmic}
\end{procedure}

\begin{procedure}[label={prot:init-hists-label}]{$\inithistslabel$}
\textbf{Input:} Labels $\vecy=(y_i)_{i=1}^n$, $\binID$, leaf indices $\vecl=(l_i)_{i=1}^n$\\
\textbf{Output:} Histograms $H$
\begin{algorithmic}[1]
\State \textbf{for} $f\in[d],\,\ell\in[2N-1],\,b\in[B]$, initialize $H[f][\ell][b]\gets 0$.
\For{$i=1$ \textbf{to} $n$}
  \For{$f=1$ \textbf{to} $d$}
    \State $b\gets \binID[i][f]$;\; $\ell\gets l_i +(N-1)$
    \State $H[f][\ell][b][y_i] \gets H[f][\ell][b][y_i] + 1$
  \EndFor
\EndFor
\For{$\ell=N-1$ \textbf{down to} $1$} \Comment{propagate up to root}
  \For{$f=1$ \textbf{to} $d$}
    \For{$b=1$ \textbf{to} $B$}
      \State $H[f][\ell][b][0] \gets H[f][2\ell][b][0] + H[f][2\ell+1][b][0]$
      \State $H[f][\ell][b][1] \gets H[f][2\ell][b][1] + H[f][2\ell+1][b][1]$
    \EndFor
  \EndFor
\EndFor
\State \Return $H$
\end{algorithmic}
\end{procedure}

\begin{procedure}[label={prot:validate-leaf-weights-label}]{$\validateleafweightslabel$}
\textbf{Input:} Leaf histograms $H$, weights $\vecw=(w_l)_{l=1}^N$\\
\textbf{Output:} Accept/Reject
\begin{algorithmic}[1]
\For{$l=1$ \textbf{to} $N$}
  \State $\ell \gets l + (N-1)$
  \State $H_0^{\mathrm{leaf}} \gets \sum_{b=1}^B H[1][\ell][b][0]$ \Comment{use any feature $f$, as the sum is feature-independent}
  \State $H_1^{\mathrm{leaf}} \gets \sum_{b=1}^B H[1][\ell][b][1]$ \Comment{use any feature $f$, as the sum is feature-independent}
  \State $w'_l \gets \dfrac{H_1^{\mathrm{leaf}}}{H_0^{\mathrm{leaf}}+H_1^{\mathrm{leaf}}}$ 
  \State \textbf{assert} $w_{l} = w'_l$
\EndFor
\end{algorithmic}
\end{procedure}

\begin{procedure}[label={prot:validate-splits-gini}]{$\validatesplitsgini$}
\textbf{Input:} Node histograms $H$, resulting splits $(\vecf,\vectt)=(f_\ell,t_\ell)_{\ell=1}^{N-1}$, and $\edges$\\
\textbf{Output:} Accept/Reject
\begin{algorithmic}[1]
\For{$\ell=1$ \textbf{to} $N-1$}
  \For{$f=1$ \textbf{to} $d$}
    \State $H_0\gets\sum_{b=1}^B H[f][\ell][b][0]$;\quad $H_1\gets\sum_{b=1}^B H[f][\ell][b][1]$
    \State $H_{0,L}\gets 0,\,H_{1,L}\gets 0$
    \For{$b=1$ \textbf{to} $B$}
      \State $H_{0,L}\gets H_{0,L} + H[f][\ell][b][0]$;\quad $H_{1,L}\gets H_{1,L} + H[f][\ell][b][1]$
      \State $H_{0,R}\gets H_0 - H_{0,L}$;\quad $H_{1,R}\gets H_1 - H_{1,L}$
      \State $H_L \gets H_{0,L} + H_{1,L}$;\quad $H_R \gets H_{0,R} + H_{1,R}$
      \State $\gain[f][b] \gets \left(1 - \left(\frac{H_0}{H_0 + H_1}\right)^2 - \left(\frac{H_1}{H_0 + H_1}\right)^2 \right)  - \frac{H_{L}}{H_L+H_R} \cdot \left(1 - \left(\frac{H_{0,L}}{H_L}\right)^2 - \left(\frac{H_{1,L}}{H_L}\right)^2\right) - \frac{H_{R}}{H_L+H_R} \cdot \left(1 - \left(\frac{H_{0,R}}{H_R}\right)^2 - \left(\frac{H_{1,R}}{H_R}\right)^2\right)$
    \EndFor
  \EndFor
  \State Retrieve $b_\ell$ s.t.\ $t_\ell = \edges[f_\ell][b_\ell]$
  \State \emph{(Argmax check)} \textbf{assert} $\gain[f_\ell][b_\ell] \ge \gain[f][b]$ for all $f\in[d],b\in[B]$.
\EndFor
\end{algorithmic}
\end{procedure}
	\section{Additional Experiments}
\label{app:exp}
\subsection{Fixed-Point XGBoost Experiments}
\label{app:exptab}

This section provides the plaintext fixed-point XGBoost training versus the standard-precision floating point training. Details are in Tables~\ref{tab:credit},\ref{tab:covertype100k} \ref{tab:bc}, \ref{tab:covertype50k},\ref{tab:adult}.

\begin{table*}[t]
  \centering
  \footnotesize
    \caption{Credit Card Default ($n=30001$, $d=23$): Accuracy and runtime of our fixed-point algorithm vs XGBoost.}
  \begin{tabular}{@{}rrrrrrr@{}}
    \toprule
    \textbf{Depth} & \textbf{Trees} & \textbf{FixedAcc} & \textbf{XGBAcc} & $\lvert \Delta \rvert$ & $T_{\text{fixed}}$ (s) & $T_{\text{xgb}}$ (s) \\
    \midrule
  4 & 50 & 0.8204 & 0.8208 & 0.0004 & 2.70 & 0.13  \\
  \midrule
  4 & 100 & 0.8200 & 0.8185 & 0.0016 & 5.37 & 0.19  \\
  \midrule
  5 & 50 & 0.8216 & 0.8191 & 0.0025 & 3.41 & 0.15  \\
  \midrule
  5 & 100 & 0.8191 & 0.8171 & 0.0020 & 6.83 & 0.22 \\
    \bottomrule
  \end{tabular}
  \label{tab:credit}
\end{table*}
\begin{table*}[h]
  \centering
    \footnotesize
    \caption{Breast Cancer ($n=569$, $d=30$): Accuracy and runtime of our fixed-point algorithm vs XGBoost.}
  \begin{tabular}{@{}rrrrrrr@{}}
    \toprule
    \textbf{Depth} & \textbf{Trees} & \textbf{FixedAcc} & \textbf{XGBAcc} & $\lvert \Delta \rvert$ & $T_{\text{fixed}}$ (s) & $T_{\text{xgb}}$ (s) \\
    \midrule
  4 & 50 & 0.9737 & 0.9766 & 0.0029 & 0.48 & 0.02 \\
  \midrule
  4 & 100 & 0.9708 & 0.9766 & 0.0058 & 0.66 & 0.02 \\
  \midrule
  5 & 50 & 0.9708 & 0.9766 & 0.0058 & 0.56 & 0.02 \\
  \midrule
  5 & 100 & 0.9708 & 0.9795 & 0.0088 & 0.74 & 0.02 \\
    \bottomrule
  \end{tabular}

  \label{tab:bc}
\end{table*}

\begin{table*}[h]
  \centering
    \footnotesize
    \caption{Covertype - 50k ($n=50000$, $d=54$): Accuracy and runtime of our fixed-point algorithm vs XGBoost.}
  \begin{tabular}{@{}rrrrrrr@{}}
    \toprule
    \textbf{Depth} & \textbf{Trees} & \textbf{FixedAcc} & \textbf{XGBAcc} & $\lvert \Delta \rvert$ & $T_{\text{fixed}}$ (s) & $T_{\text{xgb}}$ (s) \\
    \midrule
  4 & 50 & 0.8186 & 0.8217 & 0.0021 & 7.26 & 0.12 \\
  \midrule
  4 & 100 & 0.8213 & 0.8235 & 0.0022 & 14.56 & 0.35 \\
  \midrule
  5 & 50 & 0.8203 & 0.8220 & 0.0017 & 9.32 & 0.35 \\
  \midrule
  5 & 100 & 0.8635 & 0.8610 & 0.0025 & 18.47 & 0.52 \\
    \bottomrule
  \end{tabular}
  \label{tab:covertype50k}
\end{table*}
\begin{table*}[h]
  \centering
  \footnotesize
    \caption{Covertype - 100k ($n=100000$, $d=54$): Accuracy and runtime of our fixed-point algorithm vs XGBoost.}
  \begin{tabular}{@{}rrrrrrr@{}}
    \toprule
    \textbf{Depth} & \textbf{Trees} & \textbf{FixedAcc} & \textbf{XGBAcc} & $\lvert \Delta \rvert$ & $T_{\text{fixed}}$ (s) & $T_{\text{xgb}}$ (s)  \\
    \midrule
    4 & 50  & 0.8216 & 0.8216 & 0.0000 & 13.64 & 0.47  \\
    \midrule
    4 & 100 & 0.8415 & 0.8410 & 0.0004 & 27.05 & 0.73  \\
    \midrule
    5 & 50  & 0.8382 & 0.8414 & 0.0032 & 16.52 & 0.50  \\
    \midrule
    5 & 100 & 0.8653 & 0.8633 & 0.0020 & 33.11 & 0.77  \\
    \bottomrule
  \end{tabular}

  \label{tab:covertype100k}
\end{table*}
\begin{table*}[h]
  \centering
    \footnotesize
    \caption{Adult ($n=45222$, $d=104$): Accuracy and runtime of our fixed-point algorithm vs XGBoost.}
  \begin{tabular}{@{}rrrrrrr@{}}
    \toprule
    \textbf{Depth} & \textbf{Trees} & \textbf{FixedAcc} & \textbf{XGBAcc} & $\lvert \Delta \rvert$ & $T_{\text{fixed}}$ (s) & $T_{\text{xgb}}$ (s) \\
    \midrule
    4 & 50  & 0.8633 & 0.8711 & 0.0078 & 11.48 & 0.31 \\
    \midrule
    4 & 100 & 0.8645 & 0.8719 & 0.0074 & 22.75 & 0.45 \\
    \midrule
    5 & 50  & 0.8617 & 0.8708 & 0.0091 & 15.09 & 0.32 \\
    \midrule
    5 & 100 & 0.8615 & 0.8711 & 0.0096 & 30.20 & 0.49 \\
    \bottomrule
  \end{tabular}

  \label{tab:adult}
\end{table*}

\subsection{Discussion for Fixed-Point XGBoost Experimental Results}
\label{app:expxgboost}


\noindent\textbf{Performance.}
As expected, our fixed-point implementation is slower than XGBoost, which is a highly optimized C++ library leveraging vectorization, cache tuning, and parallelism. However, once we level the playing field a bit by pinning both systems to a single thread, the slowdown becomes very manageable. On the Credit Default dataset, slowdowns range between $20\times$ and $30\times$, while on the larger Covertype-100k dataset they range from $29\times$ to $42\times$. With parallelism neutralized, the residual slowdown is consistent with constant-factor overheads of a Python/NumPy prototype, which are plausibly reducible in a lower-level implementation. A rewrite in a low level language with all the relevant optimizations should substantially reduce constant factors and, with the already-demonstrated parity in accuracy, plausibly bring slowdowns into the low double digits or better. In any case, although the inefficiency is important to note, the absolute running time which is around 30 seconds even on the larger datasets is still very practical.

\noindent\textbf{Small scale (Breast Cancer) as a counterpoint.}
On the very small Breast Cancer dataset (Table~\ref{tab:bc}), the slowdown is more pronounced in relative terms, averaging around $32\times$. This reflects the fact that on tiny workloads, the fixed costs of our Python-level and fixed-point implementation dominate, while optimized libraries like XGBoost maintain nearly constant runtime overhead.


\subsection{Multiclass Classification}\label{sec:multi-class}
\paragraph{Extension to multiclass classification.}
Our implementation and evaluation focus on binary classification, which already captures many prediction tasks such as fraud detection, credit-risk prediction, and mortality prediction. Nevertheless, the ZKBoost framework is not intrinsically limited to binary classification. The standard multiclass extension of gradient boosting maintains, for each sample \(x_i\), a vector of class logits
$$z_i = (z_{i,1},\ldots,z_{i,C})$$
rather than a single scalar logit, where $C$ is the number of classes. 
At each boosting round, the learner trains one tree per class, and the prediction probabilities are obtained by applying the softmax function
$$p_{i,c} = \frac{\exp(z_{i,c})}{\sum_{c'=1}^C \exp(z_{i,c'})}.$$

The gradients and Hessians used for split selection and leaf-weight computation then become class-indexed quantities, for example
$$g_{i,c} = p_{i,c} - \mathbf{1}[y_i=c]$$
and
    $$h_{i,c} = p_{i,c}(1-p_{i,c})$$, 
under the usual diagonal approximation used in multiclass tree boosting.

The certification approach of ZKBoost can extend to this setting. Instead of validating a single tree $T_k$ per boosting round, the certificate validates a collection $\{T_{k,c}\}_{c=1}^C$ of class-specific trees. The intermediate-score check is lifted from
scalar logits to class-indexed logit vectors, while the tree-local validation procedures for inference, histogram reconstruction, split optimality, pruning, and leaf weights are applied independently to each class-specific tree. Thus, the main structural contribution of $\certxgb$, which is separating inter-tree score consistency from tree-local bottom-up validation, applies unchanged. The proof cost scales linearly with the number of classes $C$, up to the additional cost of proving the softmax normalization.

A full multiclass implementation requires substantial engineering changes and is left for future work. However, preliminary measurements suggest that the additional softmax cost is not the dominant bottleneck. Using lookup-table techniques for exponentiation \cite{hao2024scalable} together with our division proof, a ZK proof of softmax evaluation costs approximately $50\,\mu s$ per data point per boosting round in our prototype, adding about $5\%$ overhead on the Covertype benchmark. 
We also implemented a fixed-point one-vs-rest multiclass variant as a sanity check for the numerical behavior of fixed-point tree boosting, and observed accuracy
comparable to floating-point XGBoost (see Table~\ref{tab:covertype50k-multiclass}). 
These results indicate that the main remaining challenge is more of an implementation effort rather than a conceptual limitation of ZKBoost.

\begin{table*}[t]
  \centering
  \footnotesize
  \caption{Covertype-50k ($n=50000$, $d=54$): Accuracy of our fixed-point one-vs-rest multiclass implementation vs XGBoost.}
  \begin{tabular}{@{}rrrr@{}}
    \toprule
    \textbf{Depth} & \textbf{Trees} & \textbf{Fixed-OvR} & \textbf{XGB} \\
    \midrule
    4 & 50  & 0.7845 & 0.7721 \\
    \midrule
    5 & 100 & 0.8338 & 0.8254 \\
    \midrule
    6 & 200 & 0.8759 & 0.8678 \\
    \midrule
    8 & 500 & 0.8891 & 0.8956 \\
    \bottomrule
  \end{tabular}
  \label{tab:covertype50k-multiclass}
\end{table*}

\subsection{ZKP Benchmarks}
\label{sec:appendix:perf:zkp}

The performance proving the comparison and histogram construction are shown in Table~\ref{tab:histogram}. 
The improvement for proof of comparison is due to the reduction in the prevention of wrap-around attack and the switching to the membership proof based on LogUp~\cite{DBLP:journals/iacr/Habock22a}.
Additionally, it benchmarks both the ZK-RAM construction suggested by~\cite{DBLP:conf/ccs/Pappas024} and our improvement with LogUp~\cite{DBLP:journals/iacr/Habock22a}. We use the VOLE-ZK-RAM for the former~\cite{yang2024two}. It shows that our histogram proof improves the naive implementation by $3-4\times$.

\begin{table}[t]
  \centering
  \footnotesize
    \caption{\small Left: Proof of comparison compared to the previous~\cite{hao2024scalable}. Right: Proof of histogram construction based on ZK-RAM and weighted Logup. Numbers in parentheses are the improvements.}
  \begin{tabular}{@{}ccc || ccc @{}}
    \toprule
   & LAN & WAN & & LAN & WAN \\\midrule
\cite{hao2024scalable} & 17.59 & 19.17 & w/ ZK-RAM~\cite{yang2024two} & 20.48 & 31.77  \\
Our Comparison & 3.29 ($5.3\times$) & 8.52 ($2.2\times$) & w/ Weighted LogUp~\cite{DBLP:journals/iacr/Habock22a} & 4.98 ($4\times$) & 10.35 ($3\times$) \\
    \bottomrule
  \end{tabular}
  \label{tab:histogram}
\end{table}
    \clearpage
    \ifNIPS
    \section{Use of Large Language Models}\label{app:LLM}
    Large language models (LLMs) were used in the preparation of this paper in the following limited capacities:
    \begin{itemize}
        \item \emph{Language polishing and editorial assistance.} LLMs were used to improve the clarity, grammar, and flow of the writing. All technical content was written and verified by the authors.
        \item \emph{Formatting.} LLMs were used to assist with \LaTeX{} formatting and table layout.
        \item \emph{Proof checking.} LLMs were used as a preliminary sanity check on selected proof steps. The authors independently verified all proofs.
        \item \emph{Debugging.} LLM tool is used to debug the implementation and correct mistakes.
    \end{itemize}
    LLMs were not used to generate any technical content, experimental results, or security arguments.
    \clearpage   
    \section*{NeurIPS Paper Checklist}

\begin{enumerate}

\item {\bf Claims}
    \item[] Question: Do the main claims made in the abstract and introduction accurately reflect the paper's contributions and scope?
    \item[] Answer: \answerYes{} 
    \item[] Justification: We clearly state the two main technical contributions and scope of our work in the abstract and introduction, aligning them with the theoretical and experimental results presented in the paper.

\item {\bf Limitations}
    \item[] Question: Does the paper discuss the limitations of the work performed by the authors?
    \item[] Answer: \answerYes{} 
    \item[] Justification: 
    Section \ref{sec:conclusion} mentions limitations of our work and future directions.    

\item {\bf Theory assumptions and proofs}
    \item[] Question: For each theoretical result, does the paper provide the full set of assumptions and a complete (and correct) proof?
    \item[] Answer: \answerYes{} 
    \item[] Justification: 
    Appendix~\ref{app:fixedpointxgb} formally proves Claim~\ref{claim:cert-xgboost}. 
    Theorem 1 clearly states the assumption (i.e.,  $(\Func{CVOLE},\Func{ZK})$-hybrid model) and  it is formally proved in Appendix~\ref{sec:app:zkp:security}.

    \item {\bf Experimental result reproducibility}
    \item[] Question: Does the paper fully disclose all the information needed to reproduce the main experimental results of the paper to the extent that it affects the main claims and/or conclusions of the paper (regardless of whether the code and data are provided or not)?
    \item[] Answer: \answerYes{} 
    \item[] Justification: Our implementation can be accessed at \url{https://anonymous.4open.science/r/zk-xgboost-565C/README.md}. It includes the code that implements the protocol, the script that generates or downloads the dataset, and the script that executes benchmark experiments. Other information about the experiments and results can be found in Section~\ref{sec:experiments}.

\item {\bf Open access to data and code}
    \item[] Question: Does the paper provide open access to the data and code, with sufficient instructions to faithfully reproduce the main experimental results, as described in supplemental material?
    \item[] Answer: \answerYes{} 
    \item[] Justification: Our implementation can be accessed at \url{https://anonymous.4open.science/r/zk-xgboost-565C/README.md}.  The script that generates or downloads the dataset are located in the ./test/data directory. The runtime scripts are located in the ./script directory.

\item {\bf Experimental setting/details}
    \item[] Question: Does the paper specify all the training and test details (e.g., data splits, hyperparameters, how they were chosen, type of optimizer) necessary to understand the results?
    \item[] Answer: \answerYes{} 
    \item[] Justification: The setting of the environment is described in Section~\ref{sec:experiments}. The script that retrieves the dataset and trains xgboost can be located at the ./test/data repository. 

\item {\bf Experiment statistical significance}
    \item[] Question: Does the paper report error bars suitably and correctly defined or other appropriate information about the statistical significance of the experiments?
    \item[] Answer: \answerYes{} 
    \item[] Justification: We explain the experiment results regarding both accuracy and cost efficiency in Section~\ref{sec:experiments}. Detailed information about the accuracy analysis of our fixed-point XGBoost training algorithm can be found in Appendix~\ref{app:exp}. Because the standard deviations were too small to be visually informative, we report averages in the main tables.

\item {\bf Experiments compute resources}
    \item[] Question: For each experiment, does the paper provide sufficient information on the computer resources (type of compute workers, memory, time of execution) needed to reproduce the experiments?
    \item[] Answer: \answerYes{} 
    \item[] Justification: As mentioned in Section~\ref{sec:experiments}, the prover and verifier are each hosted by a AWS EC2 m5.2xlarge instances located in the same region. Each is equipped with 8 vCPUs and 32GB RAM. To simulate various network condition, we use the Linux tc tool to configure the bandwidth and latency. Specifically, we emulate a local area network (LAN, 5Gbps bandwidth) and a wide-area network (WAN, 1Gbps bandwidth and 60ms round-trip latency). The script can be found in the ./script directory.

\item {\bf Code of ethics}
    \item[] Question: Does the research conducted in the paper conform, in every respect, with the NeurIPS Code of Ethics \url{https://neurips.cc/public/EthicsGuidelines}?
    \item[] Answer: \answerYes{} 
    \item[] Justification:  This research does not involve human subjects or participants. To the best of our knowledge, there are no data-related concerns about the datasets we used, according to the NeurIPS Ethics Guidelines.  

\item {\bf Broader impacts}
    \item[] Question: Does the paper discuss both potential positive societal impacts and negative societal impacts of the work performed?
    \item[] Answer: \answerYes{}
    \item[] Justification: The first two paragraphs of Introduction describe potential societal impacts of zkPoT and applications.  
    We are currently aware of no negative societal impacts of zkPoT, because ZKP for ML has not been deployed in production yet. 

\item {\bf Safeguards}
    \item[] Question: Does the paper describe safeguards that have been put in place for responsible release of data or models that have a high risk for misuse (e.g., pre-trained language models, image generators, or scraped datasets)?
    \item[] Answer: \answerNA{} 
    \item[] Justification: We do not propose a new model in this work. We only train fixed-point XGBoost on standard, publicly available datasets. 

\item {\bf Licenses for existing assets}
    \item[] Question: Are the creators or original owners of assets (e.g., code, data, models), used in the paper, properly credited and are the license and terms of use explicitly mentioned and properly respected?
    \item[] Answer: \answerYes{} 
    \item[] Justification: We provide the credit to all used assets in the README.md file.

\item {\bf New assets}
    \item[] Question: Are new assets introduced in the paper well documented and is the documentation provided alongside the assets?
    \item[] Answer: \answerYes{} 
    \item[] Justification: The experiments utilizes public codebase and datasets and do not involve any privately accessed data or privately released code implementation. The implementation is released at anonymous github reposition as mentioned earlier: \url{https://anonymous.4open.science/r/zk-xgboost-565C/README.md}. See README.md for more detailed asset descriptions.

\item {\bf Crowdsourcing and research with human subjects}
    \item[] Question: For crowdsourcing experiments and research with human subjects, does the paper include the full text of instructions given to participants and screenshots, if applicable, as well as details about compensation (if any)? 
    \item[] Answer: \answerNA{} 
    \item[] Justification: The paper does not involve crowdsourcing nor research with human subjects. 

\item {\bf Institutional review board (IRB) approvals or equivalent for research with human subjects}
    \item[] Question: Does the paper describe potential risks incurred by study participants, whether such risks were disclosed to the subjects, and whether Institutional Review Board (IRB) approvals (or an equivalent approval/review based on the requirements of your country or institution) were obtained?
    \item[] Answer: \answerNA{} 
    \item[] Justification: The paper does not involve crowdsourcing nor research with human subjects.

\item {\bf Declaration of LLM usage}
    \item[] Question: Does the paper describe the usage of LLMs if it is an important, original, or non-standard component of the core methods in this research? Note that if the LLM is used only for writing, editing, or formatting purposes and does \emph{not} impact the core methodology, scientific rigor, or originality of the research, declaration is not required.
    \item[] Answer: \answerNA{} 
    \item[] Justification: We believe our LLM usage is standard and does \emph{not} impact the core methodology, scientific rigor, or originality of the research. For transparency, in Appendix~\ref{app:LLM}, we detail how LLMs were used in this project. 

\end{enumerate}
    \fi
	\ifICML
	\clearpage
	\tableofcontents 
	
	\setcounter{tocdepth}{1} 
	\begingroup
	\renewcommand\numberline[1]{}
	\tcblistof[\subsection*]{func}{Functionalities}
	\tcblistof[\subsection*]{alg}{Algorithms}
	\tcblistof[\subsection*]{prot}{Protocols}
	\tcblistof[\subsection*]{proc}{Procedures}
	\endgroup
	\fi

 \end{document}